%

\documentclass{article}

\usepackage{cancel}
\usepackage{bbm}

\usepackage{dsfont}
\usepackage{enumerate}
\usepackage{bm}
\usepackage[numbers]{natbib}  
\usepackage{amsmath}
\usepackage[utf8]{inputenc}
\usepackage[english]{babel}
\usepackage{amsthm}
\usepackage{amssymb}
\usepackage{algpseudocode,algorithm,algorithmicx}
\usepackage{graphicx}
\usepackage{mathtools}




\DeclareMathOperator*{\argmin}{arg\,min}
 
\newtheorem{theorem}{Theorem}

\newtheorem{lemma}{Lemma}
\newtheorem{definition}{Definition}
\newtheorem{proposition}{Proposition}

\algrenewcommand\algorithmicrequire{\textbf{Initialisation:}}
\algrenewcommand\algorithmicensure{\textbf{Output:}}

\DeclarePairedDelimiter\floor{\lfloor}{\rfloor}

\begin{document}

\title{Accelerating Parallel Tempering: Quantile Tempering Algorithm (QuanTA) 
}


\author{Nicholas G. Tawn and Gareth O.\ Roberts 
}



\maketitle

\begin{abstract}
  ~~Using MCMC to sample from a target distribution, $\pi(x)$ on a $d$-dimensional state space can be a difficult and computationally expensive problem. Particularly when the target exhibits multimodality, then the traditional methods can fail to explore the entire state space and this results in a bias sample output. Methods to overcome this issue include the parallel tempering algorithm which utilises an augmented state space approach to help the Markov chain traverse regions of low probability density and reach other modes. This method suffers from the curse of dimensionality which dramatically slows the transfer of mixing information from the auxiliary targets to the target of interest as $d \rightarrow \infty$. This paper introduces a novel prototype algorithm, QuanTA, that uses a Gaussian motivated transformation in an attempt to accelerate the mixing through the temperature schedule of a parallel tempering algorithm. This new algorithm is accompanied by a comprehensive theoretical analysis quantifying the improved efficiency and scalability of the approach; concluding that under weak regularity conditions the new approach gives accelerated mixing through the temperature schedule. Empirical evidence of the effectiveness of this new algorithm is illustrated on canonical examples.
	
	\textbf{Keywords}: Simulated Tempering, Parallel Tempering, Accelerated MCMC,  MCMC, Multimodality,  Population-MCMC, MCMCMC  and Monte Carlo.
	
\end{abstract}


\section{Introduction}

Consider the problem of stochastic simulation from a target distribution, $\pi(x)$ on a $d$-dimensional state space $\mathcal{X}$ where $\pi(\cdot)$ is known up to a scaling constant. The {\em gold standard} methodology for this problem
uses Markov chain Monte Carlo (MCMC). However these methods often perform poorly in the context of multimodality.

Most MCMC algorithms use localised proposal mechanisms, tuned towards local approximate optimality e.g.,\ \cite{roberts1997weak}, \cite{roberts2001optimal}. Indeed many MCMC algorithms incorporate local gradient information in the proposal mechanisms, typically attracting the chain back towards the centre of the mode. This can exacerbate the difficulties of moving between modes, \cite{Mangoubi2018}.

Popular methods used to overcome these issues include {\em simulated tempering}, \cite{marinari1992simulated} and the population-based version, {\em parallel tempering}, \cite{geyer1991markov}, \cite{Geyer1995}. These methods use state space augmentation to allow Markov chains to explore target distributions proportional to $\pi ^{\beta } (x)$ for $\beta$ typically in the range $(0,1]$. For simulated tempering this is done by
introducing an auxiliary {inverse temperature} variable, $\beta $, and running a $(d+1)$-dimensional Markov chain on $\mathcal{X}\times \Delta$,
where $\Delta $ consists of a discrete collection of possible inverse temperatures including $1$. For the more practically applicable parallel tempering approach, a Markov chain is run on a $\left(|\Delta|\times d\right)$-dimensional state space, $\mathcal{X}^{|\Delta|}$, where $|\Delta|$ denotes the cardinality of the set $\Delta$.

 Within this paper we will concentrate on parallel tempering as it obviates the need to approximate certain normalisation constants to work effectively. While parallel tempering has been highly successful, for example see \cite{Mohamed2012}, \cite{Xie2010}, \cite{Carter2013} etc, its efficiency declines as a function of $d$, at least linearly and often much worse \cite{atchade2011towards} and \cite{woodard2009sufficient}. This is caused by the need to set inter-inverse temperature spacings in $\Delta $ 
 extremely small to make swaps between temperatures feasible.

 This paper will introduce and analyse the QuanTA algorithm which facilitates inter-temperature swaps by proposing moves which attempt to adjust within-mode variation appropriately for the proposed new temperature. This leads to improved temperature mixing, which in turn leads to vastly improved inter-modal mixing. Its typical improvement is demonstrated in Figure~\ref{Fig:Onedimreparmex} with a 5-mode target distribution.
\begin{figure}[h]
\begin{center}
\includegraphics[keepaspectratio,width=9cm]{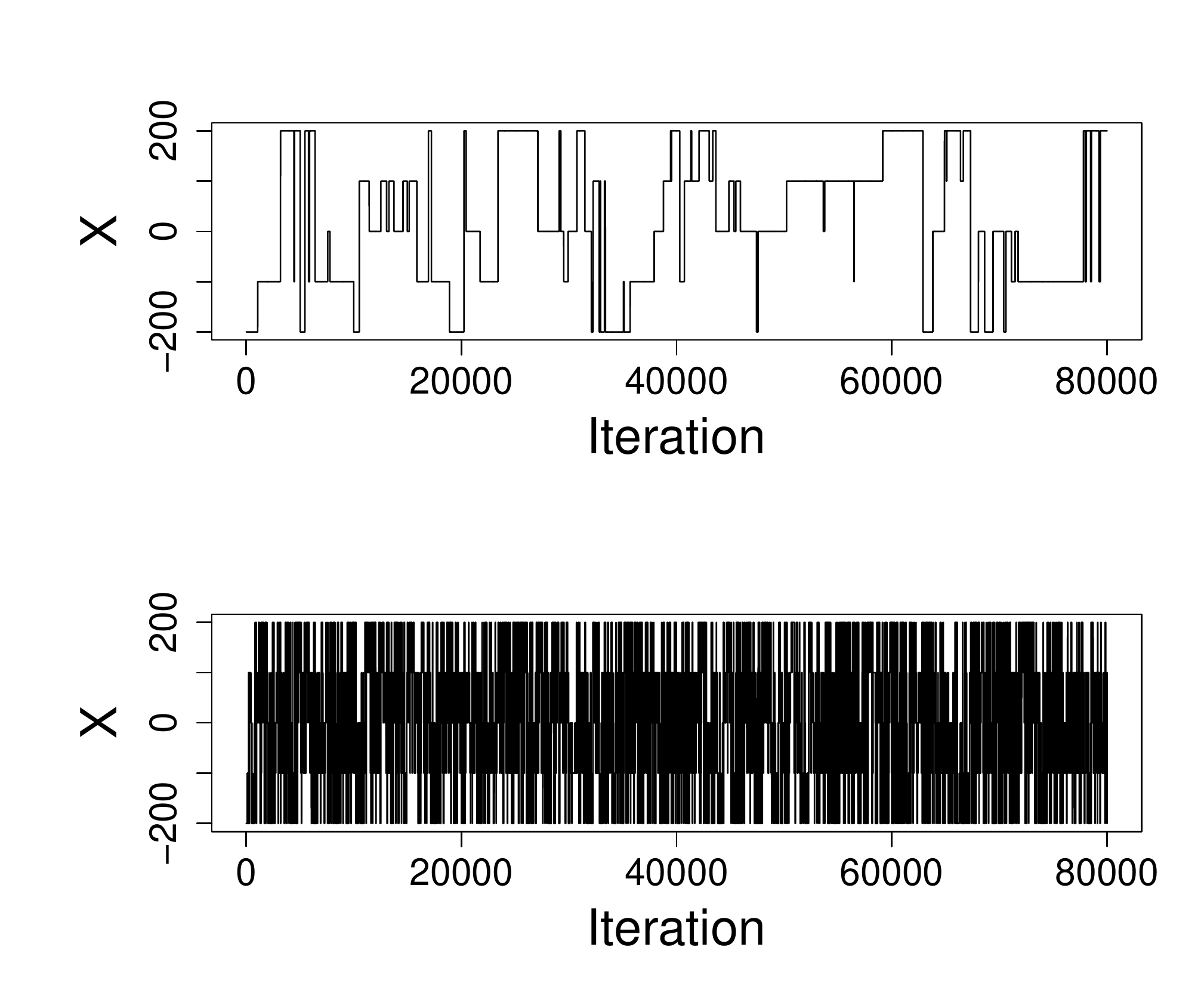}
\caption{Trace plots of the target state chains for representative runs of the Parallel Tempering (top) and QuanTA schemes (bottom). 
}
\label{Fig:Onedimreparmex}
\end{center}
\end{figure}
The construction of QuanTA resonates with the non-centering MCMC methodology described, for example in \cite{bernardo2003non}, \cite{brooks2003efficient}, \cite{hastie2005towards} and \cite{hierachicalparamRoberts}. 

Supporting theory is developed to guide setup and analyse the utility of the novel QuanTA scheme. There are two key theoretical results. The first, Theorem~\ref{Thr:optscalQuanTA}, establishes that there is an optimal temperature schedule setup for QuanTA; concluding that in general the dimensionality scaling of the distance between consecutive inverse temperature spacings should be $\mathcal{O}(d^{-1/2})$.  Further to this it suggests that optimising the expected squared jumping distance between any two consecutive temperature levels induces a temperature swap move acceptance rate of 0.234; giving a useful metric for a practitioner to optimally tune QuanTA. The second key theoretical contribution, Theorem~\ref{cor:higherorder}, of this paper shows that, under mild regularity conditions, the optimal temperature spacings of QuanTA are more ambitiously spaced than for the standard parallel tempering algorithm for cold (i.e.\ large) values of the inverse temperatures. The significance of this result is that QuanTA can give accelerated mixing through the cooler parts of the temperature schedule by allowing more ambitious temperature spacings.

This paper is structured into 6 core sections. Sections~\ref{subsec:parallel} reviews the parallel tempering algorithm and some of the relevant existing literature. Section~\ref{sec:quantpressec} motivates the main idea behind the novel QuanTA scheme, which is then presented in Section~\ref{subset:newalg}. QuanTA utilises a population MCMC approach that requires a clustering scheme; discussion for this is found in Section~\ref{subsec:locmodgauss}. Section~\ref{sec:optscale} contains the core  theoretical contributions mentioned above.  Simulation studies are detailed in Section~\ref{subsec:Examples} along with a discussion of the computational complexity of QuanTA.

\section{The Parallel Tempering (PT) Algorithm}
\label{subsec:parallel}

There is an array of methodology available to overcome the issues of multimodality in MCMC,  the majority of which use state space augmentation e.g.\ \cite{Wang1990a}, \cite{geyer1991markov}, \cite{marinari1992simulated}, \cite{Neal1996}, \cite{kou2006discussion}, \cite{2017arXiv170805239N}. Auxiliary distributions that allow a Markov chain to explore the entirety of the state space are targeted and their mixing information is then passed on to aid  inter-modal mixing in the desired target. A  convenient approach for the augmentation methods is to use power-tempered target distributions i.e.,\ the target distribution at inverse temperature level, $\beta$, for $\beta \in (0,1]$ is defined as \[\pi_\beta(x)\propto \left[\pi(x)\right]^\beta\] 

Such targets are the most common choice of auxiliary target when augmenting the state space for use in the popular simulated tempering (ST) and parallel tempering (PT) algorithms introduced in \cite{marinari1992simulated} and \cite{geyer1991markov}. For each algorithm one needs to choose a sequence of $n+1$ ``inverse temperatures'', $\Delta=\{\beta_0,\ldots,\beta_n\}$, where $0 \leq \beta_n<\beta_{n-1}<\ldots <\beta_1<\beta_0=1$ with the specification that a Markov chain sampling from the target distribution $\pi_{\beta_n}(x)$ can mix well across the entire state space.

 The PT algorithm runs a Markov chain on the augmented state space, $\mathcal{X}^{(n+1)}$, targeting an invariant distribution given by
\begin{eqnarray}
\pi_n(x_0,x_1,\ldots,x_n) \propto \pi_{\beta_0}(x_0)\pi_{\beta_1}(x_1)\ldots\pi_{\beta_n}(x_n).\label{eq:PTtarg111}
\end{eqnarray}
From an initialisation point for the chain the PT algorithm  alternates between two types of Markovian move. \textit{Within temperature} Markov chain moves that use standard localised MCMC schemes to update each of the $x_i$ whilst preserving marginal invariance. \textit{Temperature swap} moves that propose to swap the chain locations between a pair of adjacent temperature components. It is these swap moves that will allow mixing information from the hot, rapidly-mixing temperature level to be passed to aid mixing at the cold target state. 

To perform the swap move a pair of temperatures is chosen uniformly from the set of all adjacent pairs, call this pair $x_i$ and $x_{i+1}$ at inverse temperatures $\beta_{x_i}$ and $\beta_{x_{i+1}}$ respectively. The proposal is then
\begin{equation}
(x_0,\ldots,x_i,x_{i+1},\ldots,x_n) \rightarrow (x_0,\ldots,x_{i+1},x_i,\ldots,x_n) \label{eq:swapmove}
\end{equation}
To preserve detailed balance and therefore invariance to $\pi_n(\cdot)$, the swap move is accepted with probability
\begin{equation}
A=\mbox{min}\Bigg( 1,\frac{\pi_{\beta_{x_{i+1}}}(x_i)\pi_{\beta_{x_i}}(x_{i+1})}{\pi_{\beta_{x_i}}(x_i)\pi_{\beta_{x_{i+1}}}(x_{i+1})} \Bigg).
\label{eq:parstd1}
\end{equation}

It is the combination of the suitably specified \textit{within temperature} moves and \textit{temperature swap} moves that ensures ergodicity of the Markov chain to the target distribution, $\pi_n(\cdot)$. Note that the \textit{within temperature} moves certainly influence the performance of the algorithm, \cite{Ge2017}; however the focus of the work in this article will be on designing a novel approach for the \textit{temperature swap} move.

The novel work presented in this paper focuses on the setting where the $d$-dimensional state space is given by $\mathbb{R}^d$ and the target, $\pi(\cdot)$, is the associated probability density function. Thus, herein take  $\mathcal{X} = \mathbb{R}^d$ but note that natural generalisations to other state spaces and settings are possible.

\section{Modal Rescaling Transformation}
\label{sec:quantpressec}

\subsection{A Motivating Transformation Move}
\label{subsec:heurist}

Consider a PT algorithm that has two components $x_1$ and $x_2$ running at the neighbouring inverse temperature level $\beta$ and $\beta^{'}$. Suppose that a temperature swap move is proposed between the two chains at the two temperature levels. Due to the dependence between the location in the state space and the temperature level, $\beta$ and $\beta^{'}$ need to be close to each other to avoid the move having negligible acceptance probability. Intuitively, the problem is that the proposal from the hotter chain is likely to be an ``unrepresentative'' location at the colder temperature and vice versa.

So there is clearly a significant dependence between the temperature value and the location of the chain in the state space; thus explaining why temperature swap moves between arbitrarily-largely spaced temperatures are generally rejected. This issue is typically exacerbated when the dimensionality grows.

Consider for motivational purposes, a simple one-dimensional setting where the state space is given by $\mathbb{R}$ and the target density is given by $\pi(\cdot)$. For notational convenience letting $j=i+1$, suppose that a temperature swap move has been proposed between adjacent levels $\beta_i$ and  $\beta_j$ with marginal component values $x_i$ and $x_j$ respectively.

Suppose an oracle has provided a function,  $g_{ij}: \mathbb{R}\rightarrow \mathbb{R}$, that is  bijective, with $g_{ji}(g_{ij}(x))=x$, and  differentiable and preserves the CDF between the two temperature levels such that
\begin{equation}
F_{\beta_j}(g_{ij}(x))=F_{\beta_i}(x).
\label{eq:quanpres}
\end{equation}
So suppose that rather than the standard temperature swap move proposal in \eqref{eq:swapmove}, the following is instead proposed:
\begin{equation}
(x_0,\ldots,x_i,x_j,\ldots,x_n) \rightarrow (x_0,\ldots,g_{ji}(x_j),g_{ij}(x_i),\ldots,x_n) \label{eq:gswapmove}
\end{equation}

To preserve detailed balance this is accepted with an acceptance ratio similar to reversible-jump MCMC, \cite{green1995reversible}, to account for the deterministic transformation:
\begin{eqnarray}
\mbox{min}\left( 1,\frac{\pi^{\beta_j}(g_{ij}(x_i))\pi^{\beta_i}(g_{ji}(x_j))}{\pi^{\beta_i}(x_i)\pi^{\beta_j}(x_j)}\left|  \frac{\partial g_{ij}(x_i)}{\partial x}  \right|\left|  \frac{\partial g_{ji}(x_j)}{\partial x}  \right|\right). \label{eq:quantpresaccep}
\end{eqnarray}
A simple calculation using \eqref{eq:quanpres} shows that this equals one and hence such a swap would always be accepted. Essentially, the transformation $g_{ij}(\cdot)$ has made the acceptance probability of a temperature swap move independent of the locations of $x_i$ and $x_j$ in the state space.

In practice, a CDF-preserving function $g_{ij}(\cdot)$ will not generally be available. Consider a simplified setting when the target is now a $d$-dimensional Gaussian, i.e.\ $\pi\sim N(\mu,\Sigma)$, and so the tempered target at inverse temperature $\beta$ is given by $\pi^\beta \sim N(\mu,\Sigma /\beta)$. Defining a $d$-dimensional transformation by
\begin{equation}
g_{ij}(x,\mu)=\left(\frac{\beta_i}{\beta_j}\right)^{1/2}\left(x-\mu  \right)+\mu, \label{eq:gausspresnew}
\end{equation}
a simple calculation shows that in this setting such a transformation, which only requires knowledge of the mode location, permits swap moves to always be accepted independently of the dimensionality and magnitude of the inverse temperature spacings.

In a broad class of applications it is not unreasonable to make a Gaussian approximation to posterior modes, \cite{Rue2009}. Indeed this is the motivation for the similar transformation derived in  \cite{hastie2005towards} for use in a reversible-jump MCMC framework.

\subsection{Transformation move in a PT Framework}
\label{subsec:transmov}

In a multimodal setting a single Gaussian approximation to the posterior will be poor. However, it is often reasonable that the local modes may be individually approximated as Gaussian. This paper explores the use of the transformation in \eqref{eq:gausspresnew} applied to the local mode with the aim being to accelerate the mixing through the temperature schedule of a PT algorithm.

Now that the transformations are localised to modes one needs careful specification of the transformation function. Suppose that there is a collection of $K$ mode points, $\mu_1,\ldots,\mu_K$ and a metric, $m(x,y)$ for $x,y \in \mathbb{R}^d$, that will be used to associate locations in the state space with a mode. To this end define the mode allocating function
\begin{equation}
	Z(x) = \argmin_{h \in \{1,\ldots,K\}} \left[ m(x,\mu_h) \right] \nonumber
\end{equation}
and with $g_{ij}(\cdot)$ from \eqref{eq:gausspresnew} define the sets
\begin{equation}
	A_{ij} = \left\{ x \in \mathbb{R}^d :  Z(g_{ij}(x,\mu_{Z(x)}))=Z(x)  \right\} \label{Aij}
\end{equation}
define the transformation, 
\begin{equation}
g(x,\beta_i,\beta_j) =  g_{ij}(x,\mu_{Z(x)}) 
. \label{eq:fullydefinedtrans}
\end{equation}

The aim is to use this transformation in a PT framework. So suppose that a temperature swap move proposal is made between two marginal components $x_i$ and $x_j$ at respective inverse temperatures $\beta_i$ and $\beta_j$ with $\beta_i>\beta_j$. The idea is that this swap move now utilises  \eqref{eq:fullydefinedtrans} so that the proposed move takes the form
\begin{equation}
(x_0,\ldots,x_j,x_i,\ldots,x_n) \rightarrow (x_0,\ldots,g(x_i, \beta_i,\beta_j),g(x_j,, \beta_j,\beta_i),\ldots,x_n) 
\label{eq:gggswapmove}
\end{equation}
which to satisfy detailed balance is accepted with probability
\begin{eqnarray}
~~~~~\mbox{min}\left( 1,\frac{\pi(g(x_i,\beta_i,\beta_j))^{\beta_{j}}\pi(g(x_j,\beta_j,\beta_i))^{\beta_i}}{\pi(x_i)^{\beta_i}\pi(x_{j})^{\beta_{j}}} \mathbbm{1}_{\left\{x_i \in A_{ij}\right\}} \mathbbm{1}_{\left\{x_j \in A_{ji}\right\}}\right). \label{eq:quantaccrat}
\end{eqnarray}

\begin{proposition}
Consider a Markov chain that is in stationarity with a target distribution given by \eqref{eq:PTtarg111} on a state space $\mathcal{X}=\mathbb{R}^d$. Let $\mu_1,\ldots,\mu_K \in \mathbb{R}^d$. If a temperature swap move of the form \eqref{eq:gggswapmove} is proposed where the transformation is given by \eqref{eq:fullydefinedtrans} and is accepted with probability given in \eqref{eq:quantaccrat} then the chain is invariant with respect to \eqref{eq:PTtarg111}.
\end{proposition}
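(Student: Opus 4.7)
The plan is to verify detailed balance for the proposed temperature-swap kernel with respect to \eqref{eq:PTtarg111}, treating it as a deterministic-proposal Metropolis--Hastings move in the spirit of the reversible-jump MCMC framework of \cite{green1995reversible}. The uniform choice of an adjacent pair $(i,j)$ is symmetric and leaves the other $n-1$ components untouched, so it suffices to check detailed balance for the two-block update of $(x_i,x_j)$ with respect to $\pi_{\beta_i}(x_i)\pi_{\beta_j}(x_j)$; invariance of the remaining components is automatic, and within-temperature moves are already assumed to preserve marginal invariance.

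The first substantive step is to show that the deterministic map
\[
T:(x_i,x_j)\mapsto \bigl(g(x_i,\beta_i,\beta_j),\,g(x_j,\beta_j,\beta_i)\bigr)
\]
is an involution from $A_{ij}\times A_{ji}$ onto $A_{ji}\times A_{ij}$. From \eqref{eq:gausspresnew} a direct computation gives $g_{ji}(g_{ij}(x,\mu),\mu)=x$ for every $\mu$; the role of the indicator $\mathbbm{1}_{\{x_i\in A_{ij}\}}$ is to guarantee via \eqref{Aij} that $Z(g_{ij}(x_i,\mu_{Z(x_i)}))=Z(x_i)$, so that the same mode centre is used in the forward and reverse applications of $g$. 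This simultaneously yields $T\circ T=\mathrm{id}$ on the admissible set and $T(A_{ij}\times A_{ji})=A_{ji}\times A_{ij}$. Next I would compute the Jacobian: each block is the affine radial dilation by $(\beta_i/\beta_j)^{1/2}$ or $(\beta_j/\beta_i)^{1/2}$ about a fixed mode, so $|J_T|=(\beta_i/\beta_j)^{d/2}(\beta_j/\beta_i)^{d/2}=1$, which is precisely why no Jacobian factor appears in \eqref{eq:quantaccrat}. Detailed balance then follows from the standard Metropolis--Hastings identity: writing $r$ for the ratio inside \eqref{eq:quantaccrat} and $(y_i,y_j)=T(x_i,x_j)$, the symmetry $\min(1,r)\,\pi_{\beta_i}(x_i)\pi_{\beta_j}(x_j)=\min(1,1/r)\,\pi_{\beta_i}(y_i)\pi_{\beta_j}(y_j)$ holds pointwise, giving invariance of the marginal target and hence of $\pi_n(\cdot)$.

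The genuine content of the argument is entirely concentrated in the indicator bookkeeping rather than in any analytical manipulation. The main obstacle is to verify that $T$ is a \emph{global} involution on $A_{ij}\times A_{ji}$ and not merely a local diffeomorphism, and that the pair of indicators appearing in the forward acceptance ratio matches exactly the pair appearing in the reverse. Both points reduce to the definition of $A_{ij}$ through the mode-allocating function $Z$ and the observation that $g_{ij}(\cdot,\mu)$ is a radial dilation that fixes $\mu$, so that mode labels are preserved across the transformation precisely when the $A_{ij}$ condition holds; this is what makes the indicator-swap compatible with $T\circ T=\mathrm{id}$.
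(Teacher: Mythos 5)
The paper states this proposition without an accompanying proof, so there is no argument of the authors' to compare against; your proposal is correct and supplies precisely the justification that is left implicit. You identify the three points on which the deterministic-proposal Metropolis--Hastings argument actually rests: the indicator set $A_{ij}$ from \eqref{Aij} is exactly what makes the map $T$ a well-defined involution between the admissible sets (it forces the same mode centre $\mu_{Z(x)}$ to be used in the forward and reverse applications of $g$, and also gives injectivity of $x\mapsto g_{ij}(x,\mu_{Z(x)})$ on $A_{ij}$, since two preimages in different mode regions would contradict $Z(g_{ij}(x,\mu_{Z(x)}))=Z(x)$); the two block Jacobians $(\beta_i/\beta_j)^{d/2}$ and $(\beta_j/\beta_i)^{d/2}$ cancel, which is why \eqref{eq:quantaccrat} carries no Jacobian factor; and the pointwise $\min(1,r)$ symmetry then yields detailed balance for the $(x_i,x_j)$ block and hence invariance of \eqref{eq:PTtarg111}. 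One cosmetic point: with your particle-identity ordering of $T$'s output, the component $g(x_i,\beta_i,\beta_j)$ lands at inverse temperature $\beta_j$, so the reverse-state density should read $\pi_{\beta_j}\bigl(g(x_i,\beta_i,\beta_j)\bigr)\pi_{\beta_i}\bigl(g(x_j,\beta_j,\beta_i)\bigr)$ rather than $\pi_{\beta_i}(y_i)\pi_{\beta_j}(y_j)$; this is what your ratio $r$ in fact encodes, so the argument is unaffected.
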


Of course to make this transformation one needs the collection of $K$ centring points.  Essentially these are attained through use of an appropriate clustering procedure; suggestions using population MCMC methods are given in Section~\ref{subsec:locmodgauss}.

\section{Quantile Tempering Algorithm (QuanTA)}
\label{subset:newalg}

Motivated by the calculations in Section~\ref{sec:quantpressec}, QuanTA is introduced to exploit the use of the transformation move established in \eqref{eq:fullydefinedtrans} and \eqref{eq:quantaccrat}.

QuanTA runs the equivalent of $N$ parallel tempering algorithms procedures in parallel with each single procedure using the same tempering schedule. With a temperature schedule given by $\Delta=\{ \beta_0,\ldots,\beta_n\}$, the QuanTA approach can be seen as running a single Markov Chain on the augmented state space, $(\mathbb{R}^d)^{n*N}$. Denoting $\mathbf{x}= (x_{(1,0)},\ldots,x_{(1,n)},x_{(2,0)},\ldots,x_{(N,n)})$, the invariant target distribution for the Markov chain induced by QuanTA is
\begin{eqnarray}
\pi_Q(\mathbf{x}) \propto \prod_{i=1}^N\pi_{\beta_0}\left(x_{(i,0)}\right)\pi_{\beta_1}\left(x_{(i,1)}\right)\ldots\pi_{\beta_n}\left(x_{(i,n)}\right).\nonumber
\end{eqnarray} 

\textbf{Initialisation:} to initialise the QuanTA algorithm, one is required to choose: initial starting values for the Markov chain components; a suitable temperature schedule (see Theorem~\ref{Thr:optscalQuanTA} in Section~\ref{sec:scalingstheorem} for suggested optimality criteria for the temperature schedule); the size of $N$ and suitable parameters for the chosen clustering method that will be used.

\textbf{Running the chain:} from the start point of the chain, QuanTA  alternates between two types of Markov chain moves. 

\textbf{\textit{Within temperature}} Markov chain moves that use standard localised MCMC schemes for marginal updates of each of the $x_{(i,j)}$. Essentially, this is just Metropolis-within-Gibbs MCMC and in this setting, with hugely exploitable marginal independence, this process is highly parallelisable. Denote the $\pi_Q$-invariant Markov transition kernel that performs temperature marginal updates on all components from a current point $\mathbf{x}$ as $P_1(\mathbf{x},d\mathbf{y})$.

\textbf{\textit{Temperature swap}} moves that propose to swap the chain locations between a pair of adjacent temperature components. This is where QuanTA differs from the standard PT procedure and uses the new transformation aided temperature swap move detailed in Section~\ref{subsec:transmov} in particular in \eqref{eq:fullydefinedtrans}. This follows a two phase population-MCMC update procedure. 
\begin{itemize}
	\item \textbf{Phase 1:} Group marginal components into two collections, 
	\begin{eqnarray}
	C_1 &=& \{ x_{(i,j)}: i = 1,\ldots, \floor*{N/2} \text{~and~} j=0,\ldots,n\} \nonumber \\ C_2 &=& \{ x_{(i,j)}: i = (\floor*{N/2}+1),\ldots,N  \text{~and~} j=0,\ldots,n\}.\nonumber
	\end{eqnarray}
	
	An appropriate clustering scheme (see Section~\ref{subsec:locmodgauss}) is performed on $C_1$ providing a set of $K$ centres $\{c_1,\ldots,c_K\}$. To enhance the effectiveness of the transformation it is suggested that these cluster centre points are used as initialisation locations for a  suitable local optimisation procedure to find $K$ mode points $M_1=\{\mu_1,\ldots,\mu_K\}$ of $\pi (\cdot)$ (see Theorem~\ref{cor:higherorder} in Section~\ref{sec:scalingstheorem}). 
	
	 For each $i \in \{(\floor*{N/2}+1),\ldots,N \}$, sample $l \sim Unif\{0,1,\ldots,n-1\}$ and select the corresponding pair of adjacent temperature marginals $(x_{(i,l)},x_{(i,l+1)})$ for a temperature swap move proposal utilising the transformation from \eqref{eq:fullydefinedtrans} (which is centred on the associated point from $M_1$). This move is accepted with probability \eqref{eq:quantaccrat}.

	\item \textbf{Phase 2:} Repeat phase 1 but with the roles of $C_1$ and $C_2$ reversed.
\end{itemize}
Denote the $\pi_Q$-invariant Markov transition kernel that implements this  temperature swap update procedure for all components using the above two-phase process by $P_2(\mathbf{x},d\mathbf{y})$.

From the initialisation point $\mathbf{x}$ then the Markov chain output is created by application of the following kernel compilation:
\begin{equation}
	(P_2\circ P_1^k)^T \nonumber
\end{equation}
where $k$ is the user-chosen number of within temperature Markov chain updates between each swap move proposal and $T$ is the user-chosen number of iterations of the algorithm before stopping.

\begin{proposition}
Provided suitable within-temperature MCMC moves are implemented the Markov chain constructed by QuanTA is $\pi_Q(\cdot)$ invariant.
\end{proposition}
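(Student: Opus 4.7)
The plan is to prove $\pi_Q$-invariance of each of the ingredient kernels $P_1$ and $P_2$ separately, and then invoke the elementary fact that a finite composition of $\pi_Q$-invariant kernels is itself $\pi_Q$-invariant; this delivers invariance of $(P_2\circ P_1^k)^T$ for every $k,T\geq 1$. Invariance of $P_1$ is essentially immediate: $\pi_Q$ factorises as a product over the $N(n+1)$ marginals $\pi_{\beta_j}(x_{(i,j)})$, the within-temperature kernel updates each marginal by a (hypothesised) $\pi_{\beta_j}$-invariant move, and since these marginal updates do not interact the standard Metropolis-within-Gibbs argument lifts marginal invariance to joint invariance.

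The real content lies in $P_2$, where the transformation centres are constructed from the chain's current state and are therefore data-dependent. I would handle this by conditioning. Focusing on Phase 1, condition on $\mathcal{F}_1 := \sigma(C_1)$. The mode set $M_1=\{\mu_1,\ldots,\mu_K\}$ produced by the clustering plus local optimisation step is $\mathcal{F}_1$-measurable, so given $\mathcal{F}_1$ it may be treated as a fixed deterministic collection of points in $\mathbb{R}^d$. Phase 1 then touches only components in $C_2$: for each replicate $i\in\{\floor*{N/2}+1,\ldots,N\}$ an adjacent pair $(x_{(i,l)}, x_{(i,l+1)})$ is chosen with $l$ uniform on $\{0,\ldots,n-1\}$, and a swap of the form \eqref{eq:gggswapmove} is proposed and accepted with probability \eqref{eq:quantaccrat}, centred on the fixed set $M_1$. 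This is exactly the move whose $\pi_n$-invariance was established in the preceding Proposition~1, applied replicate-by-replicate, so conditional on $\mathcal{F}_1$, Phase 1 preserves the conditional distribution of $C_2$.

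To close the argument, I would use the product structure of $\pi_Q$: under $\pi_Q$ the blocks $C_1$ and $C_2$ are independent, with $C_2$ marginally distributed as $\prod_{i=\floor*{N/2}+1}^{N} \pi_n(x_{(i,\cdot)})$. Hence the conditional law of $C_2$ given $C_1$ is just this product, which is exactly what the previous step preserves. Combined with the fact that $C_1$ is untouched during Phase 1, the joint law of $(C_1,C_2)$ is unchanged, so Phase 1 is $\pi_Q$-invariant. Phase 2 is the identical argument with the roles of $C_1$ and $C_2$ swapped, so $P_2$ is $\pi_Q$-invariant as required.

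The principal obstacle, as flagged, is precisely the coupling between the transformation and the current state through the clustering output; the two-phase splitting of the replicates is the key device that decouples this, rendering the centres conditionally deterministic relative to the block being updated, after which the technical content of the proof collapses to an application of Proposition~1. A minor measurability check is needed to ensure that any auxiliary randomness used inside the clustering-plus-optimisation step is consistently accommodated, but this does not affect the structure of the argument.
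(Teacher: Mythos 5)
Your proof is correct and takes essentially the same route as the paper, whose entire proof is the one-line assertion that the construction is an instance of Metropolis-within-Gibbs; your conditioning-on-$C_1$ argument is precisely the content of that assertion spelled out in full. In particular, your identification of the two-phase splitting as the device that renders the clustering centres deterministic relative to the block being updated (so that Proposition~1 applies replicate-by-replicate) is exactly the point the paper leaves implicit.
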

\begin{proof}
Proof follows immediately since this is an instance of the Metropolis-within-Gibbs construction.
\end{proof}

\section{Estimating Local Mode Locations}
\label{subsec:locmodgauss}

The QuanTA algorithm, presented in Section~\ref{subset:newalg},  requires online estimation of the local mode points as centring location for the transformation. This section outlines a practical scheme that is used in the canonical simulation studies.

With a typically unknown number of modes and a population of chains, a principled approach  would be to fit a Dirichlet Process mixture model, e.g.\ \cite{neal2000markov} and \cite{kim2006variable}. A comprehensive Gibbs sampling approach for this can be computationally expensive, but there are alternative cheaper but approximate methods that are left for exploration in further work, \cite{raykov2016simple}.

For the examples with well-separated modes that were studied here it sufficed to use a cheap and fast clustering scheme, \cite{friedman2001elements}. To this end a K means approach was used, \cite{hartigan1979algorithm}. The clustering procedure provides a collection of cluster centres that  can be directly used as centring points for the transformation or as very  useful initialisation points for a local optimisation method. Indeed, Theorem~\ref{cor:higherorder} of Section~\ref{sec:optscale} shows that QuanTA can achieve accelerated mixing through the temperature levels when the centring point is chosen as the mode point, particularly at colder temperatures when the Gaussian approximation to the mode becomes increasingly accurate, e.g.\ \cite{Barndorff-Nielsen1989} and \cite{Olver1968}.

\subsection{A Weighted K Means Clustering}
\label{sec:Kmeans}

Typically the K means algorithm assigns all points equal leverage in determining cluster centres. A weighted K means approach incorporates weights that can alter the leverages of points. In the tempering setting chains at the colder states, where the modes are less disperse, should have more leverage in determining the centres.

Weighted K means is an almost identical procedure to the K means algorithm of \cite{hartigan1979algorithm} but now incorporates the weights to give points leverage. For the setting of interest each chain location will be allocated a weight, determined by their inverse temperature value. For a collection of $n$ chain locations $x_1,\ldots, x_n$ at inverse temperature levels $\beta_{x_1},\ldots,\beta_{x_n}$.  The weighted K means algorithm attempts to iteratively establish a particle allocation $S$ such that, each point $x_j$ to a cluster $S_k\in S= \{S_1,\ldots,S_K  \}$ with:
\begin{equation}
\mbox{argmin}_{S} \left\{ \sum_{i=1}^M \sum_{j=1}^n \mathbbm{1}_{\{x_j\in S_i\}} \beta_{x_j} ||x_j-\mu_{i}||^2 \right\}. \label{eq:objective11}
\end{equation}
The weighted K means algorithm begins with an initial set of K centres $\{\mu_1,\ldots,\mu_K\}$. It then proceeds by alternating between two updating steps until point allocations do not change (signalling a minimum of  or a pre-specified number of iterations is reached. A \textit{point allocation step}, where each point, $x$, is assigned to the set $S_j$ where $j=\argmin_j ||x-\mu_{j}||^2$. An \textit{centre point update step} where for the new allocation the centring points are each updated to be the weighted mean of their respective component steps, i.e.\ 
\begin{equation}
	\mu_i= \frac{\sum_{j\in S_i} x_j \beta_{x_j}}{\sum_{j\in S_i}  \beta_{x_j}}. \nonumber
\end{equation}

The Weighted K means procedure can be implemented using the R package``FactoClass'',  by \cite{factoclass2007combinacion} which uses a modified version of the K means algorithm of \cite{hartigan1979algorithm}.

\section{Theoretical Underpinnings of QuanTA in High Dimensions}
\label{sec:optscale}

In both QuanTA and the PT algorithms, the acceptance of temperature swap proposals allow the transfer of hot-state mixing information to be passed through to the cold state. The ambitiousness of the spacings between the consecutive inverse temperatures dictate the performance of the algorithm. Similarly to the problem of tuning the RWM algorithm, \cite{roberts1997weak}, one seeks the optimal balance between over and under-ambitious proposals. This issue becomes increasingly problematic with an increase in dimensionality, hence careful scaling of the consecutive temperatures spacings is needed to prevent degeneracy of acceptance rates.

 The work in \cite{atchade2011towards} sought an optimal scaling result for temperature spacings in a PT algorithm. This section takes a similar approach to derive an equivalent result for QuanTA. It will be shown in Theorem~\ref{Thr:optscalQuanTA} that consecutive spacings inducing swap rates of approximately 0.234 are optimal; thus giving guidance for practitioners to tune towards an optimal setup. Complementary to this, Theorem~\ref{cor:higherorder} justifies the use of QuanTA outside the Gaussian setting; showing that under mild conditions the transformation move allows for larger spacings in the temperature schedule than the PT algorithm does.

\subsection{Optimal Scaling of QuanTA- The Setup and  Assumptions}
\label{sec:setupopt}

As the dimensionality, $d$,  of the target distribution tends to infinity,  the problem of selecting temperature spacings for QuanTA is investigated.
Suppose a swap move between two consecutive temperature levels, $\beta$ and $\beta'=\beta+\epsilon$ for some $\epsilon>0$ is proposed. As in \cite{atchade2011towards}, the measure the efficiency of the inverse temperature spacing will be the expected squared jumping distance, $ESJD_\beta$, defined as
\begin{equation}
	ESJD_\beta = \mathbb{E}_{\pi_n} \left[ (\gamma-\beta)^2 \right] \label{equ:ESJDbeta1}
\end{equation}
where $\gamma=\beta+\epsilon$  if the proposed swap is accepted and $\gamma=\beta$ otherwise. Note the assumption that the Markov chain has reached invariance and so  the expectation is taken with respect to the  invariant distribution, $\pi_n(\cdot)$. 

The $ESJD_\beta$ is a natural quantity to consider, \cite{Sherlock2006}, since maximising this would appear to ensure that one is being sufficiently ambitious with spacings but not inducing degenerate acceptance rates. However, it is worth noting that it is only truly justified  when there is an associated diffusion limit for the chain, \cite{roberts2014minimising}. 

The aim is to establish the limiting behaviour of the $ESJD_\beta$ as $d\rightarrow \infty$ and then optimise this limiting form. To this end, for tractability, the form of the  $d$-dimensional target is restricted to distributions of the form:
\begin{equation}
\pi(x) \propto f_d(x) = \prod_{i=1}^d f(x_i).
\label{eq:restrict}
\end{equation}
and to achieve a non-degenerate acceptance rate as $d \rightarrow \infty$ the spacings are necessarily scaled as $\mathcal{O}(d^{-1/2})$, i.e.\
\begin{equation}
\epsilon = \frac{\ell}{d^{1/2}}.
\label{eq:epsform1}
\end{equation}
where $\ell$ a positive constant that one tunes to attain an optimal $ESJD_\beta$.

Furthermore, assume that the univariate marginal components, $f(\cdot)$, are $C^4$ and unimodal with a maximum at $\mu$.  Furthermore, the marginal components $f(\cdot)$ are assumed to be of the form
\begin{equation}
	f(x)=e^{-H(x)} ~~~~ \forall x \in \mathbb{R}\label{eq:formoff}
\end{equation}
where the $H(x):=-\log(f(x))$ is regularly varying, \cite{Bingham1989} i.e.\ there exists an $\alpha >0$ such that for $x>0$
\begin{equation}
	\frac{H(tx)}{H(t)}\rightarrow x^{\alpha} ~~\mbox{as}~~ |t|\rightarrow \infty. \label{eq:reguvar}
\end{equation}
This is a sufficient condition for Theorem~\ref{Thr:optscalQuanTA} and  ensures the moments and integrals required for the proof are all well defined. Further assume that the fourth derivatives of $(\log f)(\cdot)$ are bounded, i.e. $\exists M>0$ such that
\begin{equation}
	| (\log f)''''(z)| <M ~~~ \forall z \in \mathbb{R}. \label{eq:quantafourthass}
\end{equation}
This condition is sufficient for proving Theorem~\ref{Thr:optscalQuanTA} but not necessary. The proof still works if the condition is weakened so that for some $k\ge 4$ then the $k^{th}$ derivative of the logged density is bounded.

Finally, for notational convenience, the following are defined, with the subscript $\beta$ indicating that the expectation is with respect to $f^\beta (\cdot)$:
\begin{eqnarray}
V(\beta)&=&\mathrm{Cov}_{\beta}((\log f)(x),(x-\mu)(\log f)'(x))=\frac{1}{\beta^2} \nonumber \\
I(\beta)&=&\mathrm{Var}_{\beta}\left[(\log f)(x)\right]\nonumber \\
R(\beta)&=&\mathbb{E}_{\beta}\left[(x-\mu)^2(\log f)''(x)-(x-\mu)(\log f)'(x)\right]. \nonumber
\end{eqnarray}

Note it is assumed that the univariate marginal components, $f(\cdot)$, are unimodal. This is  a significant and strong assumption. The problem is that the allocation to a mode point essentially splits the state space into regions, and the mass in each region can be dramatically inconsistent between consecutive temperature levels, \cite{woodard2009sufficient}. This would result in a degenerate limit to the $ESJD_\beta$ in this setting.  This doesn't mean that the results presented are invalid for multimodal situations since when the modes are all well separated and identical in form, then without loss of generality it can be assumed that both particles are in the same mode.

Due to the uni-modality there is a simplified form of the acceptance probability that no-longer requires the indicator functions. 
Denote the acceptance probability of the QuanTA-style swap move by $\alpha_{\beta}(x,y)$ so
\begin{equation}
\alpha_{\beta}(x,y)=\mbox{min}\left(1, \frac{f_d^{\beta'}(g(x,\beta,\beta^{'}))f_d^{\beta}(g(y,\beta^{'},\beta))}{f_d^{\beta'}(y)f_d^{\beta}(x)} \right), \label{eq:accscaling}
\end{equation}
then a simple calculation shows that the $ESJD_\beta$ from \eqref{equ:ESJDbeta1},  becomes
\begin{eqnarray}
ESJD_{\beta} 
= \epsilon^2 \mathbb{E}_{\pi_n} \left[ \alpha_{\beta}(x,y) \right] \label{eq:ESJDP12form} 
\end{eqnarray}
which will be maximised with respect to $\ell$ in the limit as $d\rightarrow \infty$.

\subsection{ Scaling Results and Interpretation}
\label{sec:scalingstheorem}

Under the setting of Section~\ref{sec:setupopt} and with $\Phi(\cdot)$ denoting the CDF of a standard Gaussian, the following optimal scaling result is derived:
\begin{theorem}[Optimal Scaling for the QuanTA Algorithm]
Consider QuanTA targeting a distribution, $\pi(\cdot)$, satisfying \eqref{eq:restrict}. Assume that the marginal components, $f(\cdot)$, are regularly varying, satisfying \eqref{eq:formoff} and \eqref{eq:reguvar}, unimodal, and  $\log f(\cdot)$ satisfies \eqref{eq:quantafourthass}. Assuming $\epsilon = \ell / d^{1/2}$ for some  $\ell\in \mathbb{R}_{+}$ then in the limit as $d\rightarrow\infty$, the $ESJD_{\beta}$, given in \eqref{eq:ESJDP12form}  is maximised when $\ell$ is chosen to maximise
\begin{equation}
2\ell^2\Phi\left(-\frac{\ell\left[\frac{1}{2} V(\beta) -I(\beta)+\frac{1}{4\beta}R(\beta)\right]^{1/2}}{\sqrt{2}}\right), \label{eq:P1theoremstateesjd}
\end{equation}

Furthermore, for the optimal $\ell$ the corresponding swap move acceptance rate induced between two consecutive temperatures is given by 0.234 (3.s.f).
\label{Thr:optscalQuanTA}
\end{theorem}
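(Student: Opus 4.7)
The plan is to follow the now-standard optimal-scaling strategy of \cite{roberts1997weak} and \cite{atchade2011towards}: first derive a weak Gaussian limit for the log-acceptance ratio, then apply the identity $\mathbb{E}[\min(1, e^Z)] = 2\Phi(-s/2)$ for $Z \sim N(-s^2/2, s^2)$ to read off the limiting acceptance probability, and finally optimise the resulting expression over $\ell$.

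For Step 1, under the product structure \eqref{eq:restrict} the log-acceptance ratio from \eqref{eq:accscaling} decomposes as $\log R = \sum_{i=1}^d T_i$ with
\begin{equation*}
T_i = \beta'\bigl[\log f(g(x_i,\beta,\beta')) - \log f(y_i)\bigr] + \beta\bigl[\log f(g(y_i,\beta',\beta)) - \log f(x_i)\bigr],
\end{equation*}
where $(x_i,y_i)$ are iid with $x_i \sim f^\beta$ and $y_i \sim f^{\beta'}$. Writing $\beta' = \beta + \epsilon$ I would Taylor expand the scalar rescalings $g(x,\beta,\beta+\epsilon) - x = \bigl[\sqrt{\beta/(\beta+\epsilon)} - 1\bigr](x-\mu)$ and $g(y,\beta+\epsilon,\beta) - y = \bigl[\sqrt{(\beta+\epsilon)/\beta} - 1\bigr](y-\mu)$ in $\epsilon$, and then Taylor expand $\log f$ around $x$ and $y$ respectively. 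Collecting terms through order $\epsilon^2$ gives contributions involving $\log f$, $(x-\mu)(\log f)'(x)$, and $(x-\mu)^2(\log f)''(x)$, with the remainder controlled uniformly by the bound on $(\log f)''''$ from \eqref{eq:quantafourthass}.

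For Step 2, with $\epsilon = \ell / d^{1/2}$ I would apply a Lindeberg--Feller CLT to the triangular array $\{T_i\}_{i=1}^d$. By QuanTA's construction the $O(\epsilon)$ part of $T_i$ integrates to zero at leading order, so the deterministic drift first emerges at $O(\epsilon^2)$; I anticipate the limit $\log R \Rightarrow N(-\ell^2 c,\, 2\ell^2 c)$ with $c = \tfrac{1}{2}V(\beta) - I(\beta) + \tfrac{1}{4\beta}R(\beta)$. The relation between limiting mean and variance is automatic, since reversibility of the proposal forces $\mathbb{E}[e^{\log R}] = 1$, equivalent to $\mu = -\sigma^2/2$ in the Gaussian limit; the same integration-by-parts against $f^\beta$ that implements this check also produces the stated identity $V(\beta) = 1/\beta^2$. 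The regularly varying condition \eqref{eq:reguvar} ensures the moments defining $V$, $I$, $R$ and used in the Lindeberg check are all finite. Plugging into the Gaussian identity gives $\mathbb{E}_{\pi_n}[\alpha_\beta(x,y)] \to 2\Phi(-\ell c^{1/2}/\sqrt{2})$, and \eqref{eq:ESJDP12form} then yields the limiting form \eqref{eq:P1theoremstateesjd}.

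For the $0.234$ assertion, substitute $u = \ell c^{1/2}/\sqrt{2}$, reducing the optimisation to maximising $u^2 \Phi(-u)$; differentiation gives the transcendental equation $2\Phi(-u) = u\phi(u)$, whose unique positive root $u^\star \approx 1.19$ yields acceptance rate $2\Phi(-u^\star) \approx 0.234$, exactly as in the classical random-walk Metropolis optimal scaling of \cite{roberts1997weak}. The main technical obstacle is the algebraic bookkeeping in Step 1: both prefactors $\beta$ and $\beta'$ carry $\epsilon$-expansions that interact with the expansions of the $g$-maps, and several apparently unrelated $O(\epsilon^2)$ contributions must combine into the single compact constant $\tfrac{1}{2}V - I + \tfrac{1}{4\beta}R$. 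A secondary difficulty is that $y_i \sim f^{\beta+\epsilon}$ rather than $f^\beta$, so the sum is not strictly stationary; this mild asymmetry can be absorbed into the $o(1)$ remainder via a further $\epsilon$-expansion of $f^{\beta+\epsilon}$ about $f^\beta$.
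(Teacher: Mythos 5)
Your proposal is correct and follows essentially the same route as the paper's proof: a Taylor expansion of the log-acceptance ratio with the remainder controlled by the fourth-derivative bound \eqref{eq:quantafourthass} (the paper's Lemma~\ref{lemm:Taylorexpand}), a CLT giving the $N(-\sigma^2/2,\sigma^2)$ limit with the mean--variance relation verified by integration by parts against $f^{\beta}$ (Lemma~\ref{lem:asymgaussiy} and Proposition~\ref{cor:minushalsig}), and the $u$-substitution reducing the optimisation to $u^2\Phi(-u)$ with the $0.234$ root (Lemma~\ref{lem:ESJDOPT}). The only cosmetic differences are that you phrase the CLT as Lindeberg--Feller for a triangular array and invoke $\mathbb{E}[e^{\log R}]=1$ to motivate $\mu=-\sigma^2/2$, where the paper computes the limiting mean and variance explicitly and checks the identity directly.
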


\begin{proof}
The details of the proof of Theorem~\ref{Thr:optscalQuanTA} are deferred to the Appendix, Section~\ref{sec:proofQuanTA}. The strategy  comprises 3 key stages which are: establishing a Taylor  series expansion of the logged swap move acceptance ratio (i.e.\ the log of \eqref{eq:accscaling});  establishing limiting Gaussianity of this logged acceptance ratio; and finally, achieving a tractable form of the limiting $ESJD_{\beta}$ which is then optimised with respect to $\ell$ giving rise to an associated optimal acceptance rate. 
\end{proof}

\textit{Remark 1:}  In the special case that the marginal targets are Gaussian, i.e.\ $f(x) = \phi(x; \mu, \sigma^2)$ then the transformation swap move should permit arbitrarily ambitious spacings. This is verified by observing that in this case 	
\begin{equation}
\left[\frac{1}{2} V(\beta) -I(\beta)+\frac{1}{4\beta}R(\beta)\right]=0 \nonumber
\end{equation}
and so with respect to $\ell$ \eqref{eq:P1theoremstateesjd} becomes proportional to $\ell^2$ which has no finite maximal value; thus demonstrating  consistency with what is know in the Gaussian case.

\textit{Remark 2:} The optimality criterion given in \eqref{eq:P1theoremstateesjd} is very similar to that derived in \cite{atchade2011towards} and \cite{roberts2014minimising}. Indeed, both QuanTA and the PT algorithm require the same dimensionality spacing scaling and both are optimised when a 0.234 acceptance rate is induced. However, there will be a difference in the behaviour of the optimal $\hat{\ell}$ which is where QuanTA can be shown to give accelerated mixing versus the PT approach, see Theorem~\ref{cor:higherorder} below.

\textit{Remark 3:} Theorem~\ref{Thr:optscalQuanTA} gives an explicit formula for derivation of the optimal $\hat{\ell}$ between consecutive temperatures but this is usually intractable in a real problem. However, for a practitioner, the associated 0.234 optimal swap acceptance rate gives useful setup guidelines. In fact, the theorem suggest a strategy for optimal setup starting with a chain at the hottest level and tuning the spacing to successively colder temperature levels based on the swap acceptance rate to attain consecutive swap rates close to 0.234. Indeed, using a stochastic approximation algorithm, see \cite{robbins1951stochastic}, then \cite{miasojedow2013adaptive} took an adaptive MCMC approach, \cite{roberts2009examples}, to do this for the PT algorithm but their framework also extends naturally to QuanTA.

\subsubsection{Higher Order Scalings at Cold Temperatures}
\label{sec:supercoldscalings}

For any univariate Gaussian distribution at inverse temperature level $\beta$, $I(\beta)=1/(2\beta^2)$. It is shown in \cite{atchade2011towards} that the optimal choice for the scaling parameter takes the form
\begin{equation}
	\hat{\ell}\propto I(\beta)^{-1/2} \propto \beta \label{eq:thebasecase}
\end{equation}
resulting in a geometrically spaced temperature schedule.

Assuming appropriate smoothness for the marginal components , $f(\cdot)$, then for a sufficiently cold temperature the local mode can be well approximated by a Gaussian.  So for sufficiently cold temperatures one expects $I(\beta) \approx 1/(2\beta^2)$; thus spacings become (approximately) $\mathcal{O}(\beta)$ (note that a rigorous derivation that $I(\beta) \approx 1/(2\beta^2)$ is contained in the proof of Theorem~\ref{cor:higherorder}). Defining the ``order of the spacing with respect to the inverse temperature, $\beta$'' as the value of $\zeta$ such that the optimal spacing is $\mathcal{O}(\beta^\zeta)$ then the standard PT algorithm is order 1 for sufficiently cold temperatures.

In the Gaussian setting,  QuanTA  exhibits  ``infinitely'' high order behaviour since there is no restriction on the size of the temperature spacings with regards the value of $\beta$.  It is hoped that some of this higher order behaviour is inherited in a more general target distribution setting when the target is cooled and increasingly approaches Gaussianity. Indeed, under the setting of  Theorem~\ref{Thr:optscalQuanTA} but with a single additional condition it is shown that QuanTA does exhibit higher order behaviour than the PT algorithm at cold temperatures.

With $f(\cdot)$ as in Theorem~\ref{Thr:optscalQuanTA} ( but now without loss of generality the mode point is at $\mu=0$) define the normalised density $g_\beta(\cdot)$ as
\begin{equation}
	g_\beta (y) \propto f^\beta \left(\mu+ \frac{y}{\sqrt{-\beta(\log f)''(\mu)}}\right)=f^\beta \left( \frac{y}{\sqrt{-\beta(\log f)''(0)}}\right).  \label{eq:gformtherscal}
\end{equation}

The additional assumption required to prove the higher order behaviour of QuanTA is that there exists $\gamma>0$ such that as $\beta\rightarrow \infty$
\begin{equation}
	| \mathrm{Var}_{g_\beta}\left( Y^2  \right) -2 | = \mathcal{O}\left( \frac{1}{\beta^\gamma} \right). \label{eq:assumptionscale}
\end{equation}
This assumption essentially guarantees the convergence to Gaussianity about the mode as $\beta\rightarrow\infty$. This assumption appears to be reasonable with studies of both a Gamma and a student-t distributions demonstrating a value of $\gamma=1$; details can be found in \cite{NTawnThesis}.

\begin{theorem}[Cold Temperature Scalings]\label{cor:higherorder}
For marginal targets, $f(\cdot)$, satisfying the conditions of Theorem~\ref{Thr:optscalQuanTA} and \eqref{eq:assumptionscale}, then  for $\beta$ sufficiently large 
\begin{equation}
	\left[\frac{1}{2} V(\beta) -I(\beta)+\frac{1}{4\beta}R(\beta)\right] = \mathcal{O}\left(\frac{1}{\beta^k}\right), \nonumber
\end{equation}
where 
\begin{itemize}
	\item $k=\min\left\{2+\gamma, 3 \right\}>2$ if $f$ is symmetric about the mode point 0
	\item $k=\min\left\{2+\gamma, \frac{5}{2} \right\}>2$ otherwise.
\end{itemize}
This induces an optimising value $\hat{\ell}$ such that
\begin{equation}
	\hat{\ell}= \mathcal{O}\left(\beta^{\frac{k}{2}}\right), \label{eq:ceorhigherord}
\end{equation}
showing that at the colder temperatures QuanTA permits higher order behaviour than the standard PT scheme which has $\hat{\ell}= \mathcal{O}\left(\beta\right)$.
\end{theorem}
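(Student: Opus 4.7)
Set $c := -(\log f)''(0) > 0$ and make the change of variables $y = x\sqrt{\beta c}$, so that $\mathbb{E}_{f^\beta}[X^n] = m_n(\beta)/(\beta c)^{n/2}$ where $m_n(\beta) := \mathbb{E}_{g_\beta}[Y^n]$. Denote
\begin{equation*}
Q(\beta) \;:=\; \tfrac{1}{2}V(\beta) - I(\beta) + \tfrac{1}{4\beta}R(\beta).
\end{equation*}
The plan is to Taylor expand $\log f$ about the mode $0$, express $I(\beta)$ and $R(\beta)$ as finite linear combinations of the $m_n(\beta)$ modulo controlled remainders, establish that the leading $\beta^{-2}$ contributions cancel, and then use \eqref{eq:assumptionscale} to bound the residual. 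The scaling \eqref{eq:ceorhigherord} on $\hat\ell$ follows directly from \eqref{eq:P1theoremstateesjd}: differentiating $2\ell^2 \Phi(-\ell \sqrt{Q(\beta)}/\sqrt{2})$ with respect to $\ell$ shows the maximiser satisfies $\hat\ell \propto Q(\beta)^{-1/2}$, so once $Q(\beta) = \mathcal{O}(\beta^{-k})$ is established one obtains $\hat\ell = \mathcal{O}(\beta^{k/2})$.

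\textbf{Key identities.} An integration-by-parts argument using $(\log f)'(x) f^\beta(x) = \beta^{-1}\partial_x f^\beta(x)$, with boundary terms vanishing by the regular-variation hypothesis \eqref{eq:reguvar}, yields the exact identity $V(\beta) = 1/\beta^2$ and the closed-form reduction
\begin{equation*}
R(\beta) \;=\; \tfrac{3}{\beta} \;-\; \beta\,\mathbb{E}_{f^\beta}\!\left[x^2\bigl((\log f)'(x)\bigr)^2\right].
\end{equation*}
Writing $a_k := (\log f)^{(k)}(0)$ (so $a_2 = -c$) and Taylor expanding $\log f$ and $(\log f)'$ about $0$ up to fourth order, then substituting $x = y/\sqrt{\beta c}$, gives
\begin{equation*}
I(\beta) \;=\; \tfrac{1}{4\beta^2}\,\mathrm{Var}_{g_\beta}(Y^2) \;-\; \tfrac{a_3}{6 c^{3/2}\beta^{5/2}}\,\mathrm{Cov}_{g_\beta}(Y^2, Y^3) \;+\; \mathcal{O}(\beta^{-3}),
\end{equation*}
\begin{equation*}
\tfrac{1}{4\beta} R(\beta) \;=\; \tfrac{3 - m_4(\beta)}{4\beta^2} \;+\; \tfrac{a_3\, m_5(\beta)}{4 c^{3/2}\beta^{5/2}} \;+\; \mathcal{O}(\beta^{-3}).
\end{equation*}

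\textbf{Cancellations and rates.} Using the elementary identity $m_4(\beta) = \mathrm{Var}_{g_\beta}(Y^2) + m_2(\beta)^2$, the $\beta^{-2}$ contributions to $Q(\beta)$ telescope to
\begin{equation*}
\frac{2 - \mathrm{Var}_{g_\beta}(Y^2)}{2\beta^2} \;+\; \frac{1 - m_2(\beta)^2}{4\beta^2}.
\end{equation*}
The first summand is $\mathcal{O}(\beta^{-(2+\gamma)})$ directly from \eqref{eq:assumptionscale}, while the second is $\mathcal{O}(\beta^{-3})$ by a standard Laplace argument (which, under \eqref{eq:quantafourthass}, upgrades $m_2(\beta) \to 1$ to the explicit rate $m_2(\beta) = 1 + \mathcal{O}(\beta^{-1})$). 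The residual $\beta^{-5/2}$ pieces of $Q(\beta)$ all carry a factor $a_3 m_5(\beta)$ or $a_3 \mathrm{Cov}_{g_\beta}(Y^2, Y^3)$, each of which is $o(1)$ by convergence $g_\beta \to N(0,1)$; hence they are $o(\beta^{-5/2})$, and in particular $\mathcal{O}(\beta^{-5/2})$. This gives $Q(\beta) = \mathcal{O}(\beta^{-\min\{2+\gamma,\,5/2\}})$ in general. When $f$ is symmetric about the mode, every odd Taylor coefficient $a_{2j+1}$ vanishes, so $a_3 = 0$ kills these cross terms identically and the bottleneck passes to the $\mathcal{O}(\beta^{-3})$ quartic correction driven by $a_4$, yielding $Q(\beta) = \mathcal{O}(\beta^{-\min\{2+\gamma,\,3\}})$.

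\textbf{Main obstacle.} The genuine technical difficulty is not the algebra of the cancellation but the uniform control of tail contributions to the integrals: Taylor expansions of $\log f$ are only locally valid around $0$, whereas $g_\beta$ has full support on $\mathbb{R}$. The regular-variation assumption \eqref{eq:reguvar} does essential work here, forcing $g_\beta$ to concentrate super-polynomially around the origin so that mass outside any $\beta^\epsilon$-neighbourhood of $0$ contributes negligibly at every order of interest; the bounded fourth derivative \eqref{eq:quantafourthass} then controls the pointwise size of the quartic Taylor remainders on the bulk, which in turn permits interchanging the Taylor expansion with the $g_\beta$-integration. Once these tail estimates are verified, the bound $Q(\beta) = \mathcal{O}(\beta^{-k})$ with $k$ as in the theorem follows from the previous paragraph, and \eqref{eq:ceorhigherord} is an immediate consequence of the relation $\hat\ell \propto Q(\beta)^{-1/2}$ obtained from \eqref{eq:P1theoremstateesjd}.
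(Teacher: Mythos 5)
Your overall strategy matches the paper's: pass to the normalised variable $Y=\sqrt{\beta c}\,X$, Taylor expand $\log f$ about the mode to fourth order using \eqref{eq:quantafourthass} to control remainders, show the $\beta^{-2}$ contributions cancel, invoke \eqref{eq:assumptionscale} for the residual rate, and read off $\hat\ell\propto Q(\beta)^{-1/2}$ from \eqref{eq:P1theoremstateesjd}. Your expansion of $I(\beta)$ (leading term $\tfrac{1}{4\beta^2}\mathrm{Var}_{g_\beta}(Y^2)$ plus an $a_3$-weighted $\beta^{-5/2}$ cross term) is exactly the paper's Lemma~\ref{lem:highterm1}. Where you genuinely diverge is the treatment of $R(\beta)$: the paper (Lemma~\ref{lem:highterm2}) Taylor expands $x^2h''(x)-xh'(x)$ directly and observes that the order-$\beta^{-2}$ terms cancel \emph{pointwise} ($-cx^2-(-cx^2)=0$), so $\tfrac{1}{4\beta}R(\beta)$ is immediately $\mathcal{O}(\beta^{-5/2})$ (or $\mathcal{O}(\beta^{-3})$ when $a_3=0$) with no reference to moments of $g_\beta$. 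You instead first reduce $R(\beta)=\tfrac{3}{\beta}-\beta\,\mathbb{E}_{f^\beta}[x^2((\log f)'(x))^2]$ by integration by parts (this identity is correct, and is essentially \eqref{eq:ek2} in the paper rearranged), obtaining $\tfrac{1}{4\beta}R(\beta)=\tfrac{3-m_4(\beta)}{4\beta^2}+\dots$, and then rely on the decomposition $m_4=\mathrm{Var}_{g_\beta}(Y^2)+m_2^2$ to make the cancellation happen in expectation rather than pointwise.

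The cost of your route is the residual term $\tfrac{1-m_2(\beta)^2}{4\beta^2}$. Note that \eqref{eq:assumptionscale} constrains only the combination $m_4-m_2^2$, so it does \emph{not} imply any rate for $m_2(\beta)\to 1$: one can have $\mathrm{Var}_{g_\beta}(Y^2)=2$ exactly while $m_2=1+\beta^{-1/4}$, in which case your second summand is only $\mathcal{O}(\beta^{-9/4})$ and the claimed $k$ fails. You therefore need the separate estimate $m_2(\beta)=1+\mathcal{O}(\beta^{-1/2})$ (general case) or $\mathcal{O}(\beta^{-1})$ (to reach $k=3$ in the symmetric case), which you correctly attribute to a Laplace-type expansion but do not prove; establishing it rigorously requires precisely the tail-splitting argument you describe as the ``main obstacle''. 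The paper's pointwise cancellation sidesteps this entirely, which is the cleaner path. With that one estimate supplied, your argument is complete and the remaining steps (boundedness of $m_5$ and $\mathrm{Cov}_{g_\beta}(Y^2,Y^3)$ from the regular-variation moment bounds, vanishing of all $a_3$-terms under symmetry, and the reduction of \eqref{eq:ceorhigherord} to $\hat\ell\propto Q(\beta)^{-1/2}$) are all sound.
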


\begin{proof}
Since the optimal $\ell$ derived in Theorem~\ref{Thr:optscalQuanTA} is given by \[\hat{\ell} \propto \left[\frac{1}{2} V(\beta) -I(\beta)+\frac{1}{4\beta}R(\beta) \right]^{-1/2}\] 
the proof of Theorem~\ref{cor:higherorder} follows immediately if it can be shown that
\begin{equation}
	\left[\frac{1}{2} V(\beta) -I(\beta)+\frac{1}{4\beta}R(\beta) \right]=\mathcal{O}\left(\frac{1}{\beta^k}\right). \label{neededcor}
\end{equation}

Indeed, two key Lemmata are derived in Section~\ref{sec:highprof} in the Appendix: Lemma~\ref{lem:highterm1} establishes that $\frac{1}{2} V(\beta) -I(\beta)=\mathcal{O}\left(\frac{1}{\beta^k}\right)$ and Lemma~\ref{lem:highterm2}  establishes that $\frac{1}{4\beta}R(\beta)=\mathcal{O}\left(\frac{1}{\beta^k}\right)$. Thus the result in \eqref{neededcor} holds and the proof is complete.

\end{proof}

\textit{Remark 4:} The result in Theorem~\ref{cor:higherorder} does not imply that QuanTA isn't useful outside the Gaussian or super cold settings. The QuanTA approach will be practically useful in settings where the mode can be well approximated by a Gaussian and thus allow the shift move to approximately preserve the quantile. What Theorem~\ref{cor:higherorder} does show is that for a large class of distributions that exhibit appropriate smoothness, QuanTA is sensible, and is arguably the canonical approach to take at the super cold levels, since it enables acceleration of the mixing speed through the temperature schedule.

\section{Examples of Implementation}
\label{subsec:Examples}

 This section gives illustrative examples for the canonical setting of a Gaussian mixture to illustrate the potential gains of QuanTA over the standard PT approach.

The QuanTA transformation move does not solve all the issues inherent in the PT framework. This will be highlighted with the final example in this section. In fact,  \cite{woodard2009conditions} and \cite{woodard2009sufficient} shows that for most ``interesting'' examples the mixing speed decays exponentially slowly with dimension. Prototype approaches to navigating this problem can be found in \cite{NTawnThesis}.

In each of the examples given, both the new QuanTA and standard (PT) parallel schemes will be run for comparison of performance. In all examples:
\begin{enumerate}
\item Both the new QuanTA and PT versions were run 10 times to ensure replicability.
\item Both the PT and QuanTA algorithms were run so that 20,000 swap moves would be attempted. For QuanTA this would be 20,000 swaps for each of the $N$ individual parallel tempering schemes in parallel of which there were $N=100$ in this example. Also all schemes had the same within to swap move ratio $(3:1)$.
\item Both versions use the same set of (geometrically generated) temperature spacings; chosen to be overly ambitious for the PT setup but demonstrably under-ambitious for the new QuanTA scheme.
\item Also presented is the optimal temperature schedule for the PT setup generated under the optimal acceptance rate of 0.234 for the PT algorithm suggested by \cite{atchade2011towards}. This demonstrates the extra complexity needed to produce a functioning algorithm for the PT approach.
\item For all runs, the within temperature level proposals were made with Gaussian RWM moves tuned to an optimal 0.234 acceptance rate, \cite{roberts1997weak}.
\end{enumerate}

\subsection{One-dimensional Example}
\label{reparam1Dex}

Target distribution given by:
\begin{equation}
	\pi(x) \propto \sum_{k=1}^5 w_k \phi(x;\mu_k,\sigma^2)
	\label{eq:onedimrepar5mode}
\end{equation}
where $\phi(\cdot;\mu,\sigma^2)$ is the density function of a univariate Gaussian with mean $\mu$ and variance $\sigma^2$. In this example, $\sigma=0.01$, the mode centres are given by $(\mu_1,\mu_2,\mu_3,\mu_4,\mu_5)=(-200,-100,0,100,200)$ and all modes are equally weighted with $w_1=w_2=\ldots=w_5$.

The temperature schedule for this example is given by a geometric schedule with an ambitious $0.0002$ common ratio for the spacings. Only 3 levels are used and so the temperature schedule is given by $\Delta=\{ 1,0.0002,0.0002^2 \}$, see Figure~\ref{Fig:onedfive}.
\begin{figure}[h]
\begin{center}
\includegraphics[keepaspectratio,width=9cm]{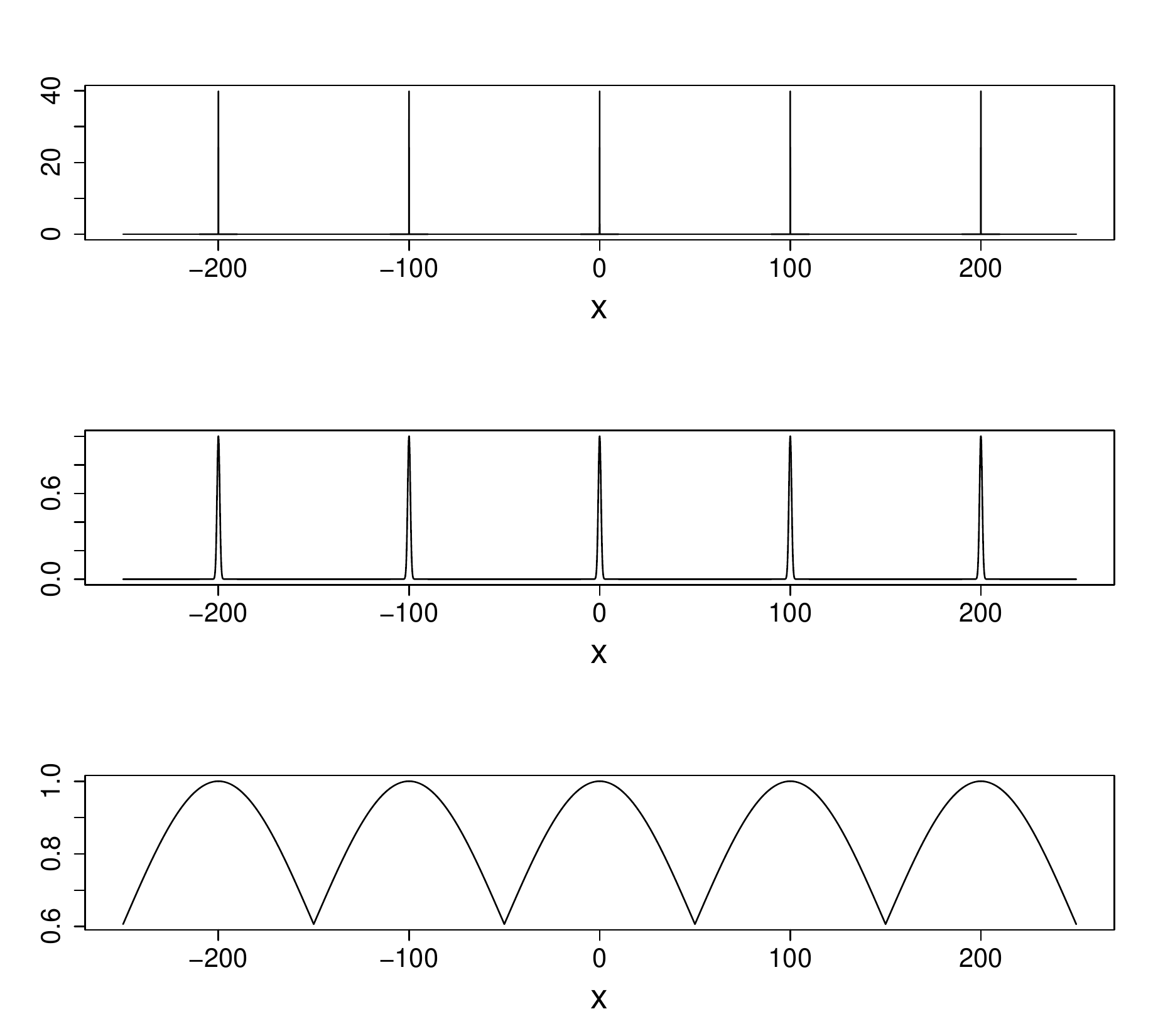}
\caption{The (non-normalised) tempered target distributions for (\ref{eq:onedimrepar5mode}) for inverse temperatures $\Delta=\{ 1,0.0002,0.0002^2 \}$ respectively.  }
\label{Fig:onedfive}
\end{center}
\end{figure}

In all runs all the chains were started from  a start location of -200. 
Figure~\ref{Fig:Onedimreparmex}, from the introductory section, shows two representative trace plots of the target state chain for a run of the PT algorithm and a single scheme from QuanTA respectively. There is a clear improvement in the inter-modal mixing for the QuanTA.

Table~\ref{tab:1dcomp} gives the associated acceptance rates. Clearly the rate of transfer of mixing information from the hot states to the cold state is significantly higher for QuanTA.
\begin{table}[h]
\begin{center}
\begin{tabular}{|c||c|c|}
  \hline
    Swap location: & 1& 2  \\ \hline \hline
  PT & 0.06 & 0.07  \\ \hline
  QuanTA & 0.99 & 0.99 \\
  \hline 
\end{tabular}
\caption{Comparison of the acceptance rates of swap moves for the PT algorithm and QuanTA targeting the one dimensional distribution given in (\ref{eq:onedimrepar5mode}) and setup with the ambitious inverse temperature schedule given by $\Delta=\{ 1,0.0002,0.0002^2 \}$.}
\label{tab:1dcomp}
\end{center}
\end{table}

Figure~\ref{Fig:OneDWEightconReparam} compares the running modal weight approximation for the mode centred on 200 when using the standard PT and QuanTA schemes respectively. This used the cold state chains from 10 individual runs of the PT algorithm and 10 single schemes selected randomly from 10 separate runs of the QuanTA algorithm. 

Denoting the estimator of the $k^{th}$ mode's weight by $\hat{w_k}$ and the respective cold state chain's $i^{th}$ value as $X_i$,
\begin{equation}
	\hat{w_k} = \frac{1}{N-B+1}\sum_{i=B}^N\mathbbm{1}_{\{ c_k<X_i\le C_k\}}.
	\label{eq:weightreparamest}
\end{equation}
where $c_k$ and $C_k$ are the chosen upper and lower boundary points for allocation to the $k^{th}$ mode; and $B$ is the length of the burn-in removed.

Figure~\ref{Fig:OneDWEightconReparam} shows the QuanTA approach has a vastly improved rate of convergence; with the PT runs still exhibiting bias from the chain initialisation locations.
\begin{figure}[h]
\begin{center}
\includegraphics[keepaspectratio,width=10cm]{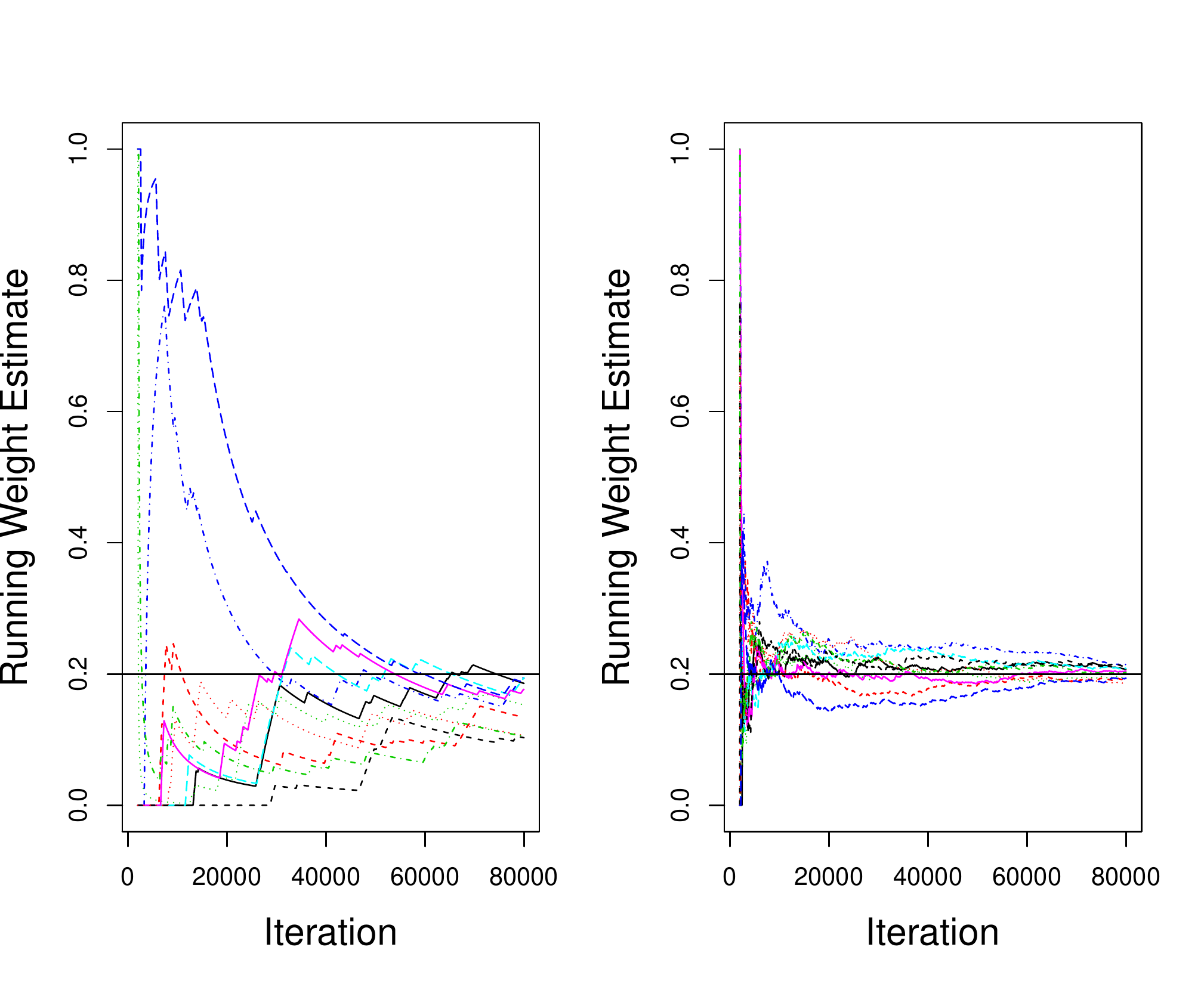}
\caption{For the target given in (\ref{eq:onedimrepar5mode}), the running weight approximations for the mode centred on 200 with target weight $w_5=0.2$ for 10 separate runs of the PT  and QuanTA schemes respectively. Left: the PT runs showing slow and variable estimates for $w_5$. Right: the new QuanTA scheme showing fast, unbiased convergence to the true value for $w_5$}
\label{Fig:OneDWEightconReparam}
\end{center}
\end{figure}

An interesting comparison between the approaches is to observe how many extra temperature levels would be required to make the PT scheme work optimally (i.e.\ with consecutive 0.234 swap acceptance rates). This gives a clearer idea of the reduction in number of intermediate levels that can be achieved using the QuanTA. 

With the same hottest state level of $\beta=0.0002^2$, a geometrical inverse temperature schedule was tuned to give a swap rate of approximately 0.234 was achieved between consecutive levels for the PT algorithm in this example. In fact a 0.04 geometric ratio suggested optimality for the PT scheme. Hence, to reach the stated hottest level needs 7 temperatures, as opposed to the 3 that were evidently unambitious for QuanTA. 

\subsection{Twenty-dimensional Example}
\label{reparam20Dex}

The target distribution is a 20-dimensional tri-modal Gaussian:
\begin{equation}
	\pi(x) \propto \sum_{k=1}^3 w_k \left[ \prod_{j=1}^{20} \phi(x_j;\mu_k,\sigma^2) \right].
	\label{eq:twentydimrepar3mode}
\end{equation}
In this example, $\sigma=0.01$, the marginal mode centres are given by $(\mu_1,\mu_2,\mu_3)=(-20,0,20)$ and all modes are equally weighted with $w_1=w_2=w_3$.The temperature schedule for this example is derived from a geometric schedule with an ambitious $0.002$ common ratio for the spacings. Only 4 levels are used and so the temperature schedule is given by $\{ 1,0.002,0.002^2,0.002^3 \}$.

\begin{figure}[h]
\begin{center}
\includegraphics[keepaspectratio,width=10cm]{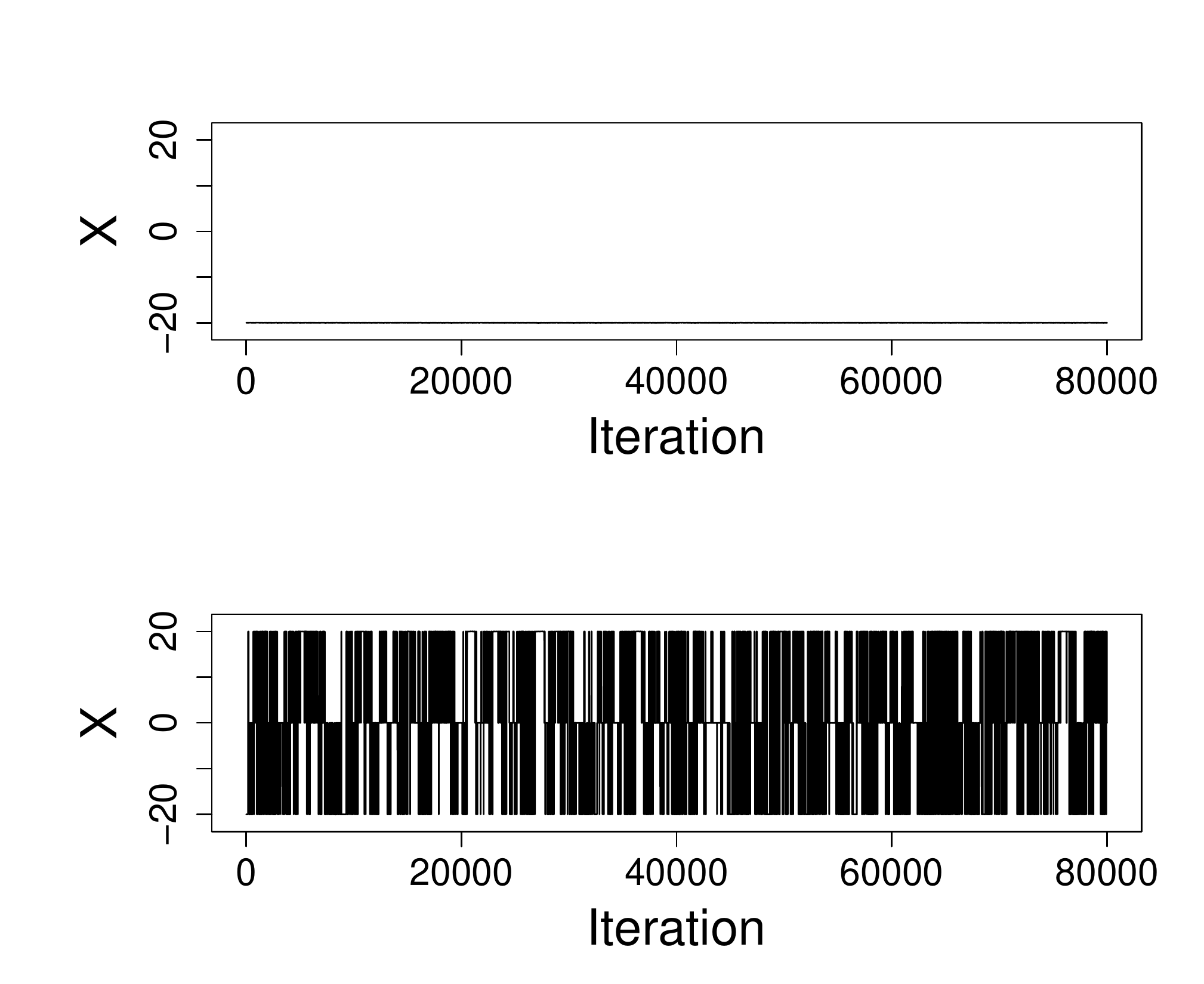}
\caption{Trace plots of the first component of the twenty dimensional cold state chains for representative runs of the PT (top) and new QuanTA (bottom) schemes. Note the fast inter-modal mixing of the new QuanTA scheme, allowing rapid exploration of the target distribution. In contrast the the PT scheme never escapes the initialising mode.}
\label{Fig:twentydimreparmex}
\end{center}
\end{figure}

In all runs all the chains were started from a  start location of $(-20,\ldots,-20)$.
Figure~\ref{Fig:twentydimreparmex} shows two representative trace plots of the target state chain for a run of the PT algorithm and QuanTA respectively. There is a clear improvement in the inter-modal mixing for the new QuanTA scheme. There is a stark contrast between the two algorithmic performances. The run using the standard PT scheme entirely fails to improve the mixing of the cold chain. In contrast the QuanTA scheme establishes a chain that is very effective at escaping the initialising mode and then mixes rapidly throughout the state space between the three modes.

The consecutive swap acceptance rates between the four levels are given in Table~\ref{tab:20dcomp}. Clearly there is no transfer of mixing information from the hot states to the cold state for the PT algorithm but there is excellent mixing in the QuanTA.
\begin{table}[h]
\begin{center}
\begin{tabular}{|c||c|c|c|}
  \hline
    Swap location: & 1& 2 & 3  \\ \hline \hline
  PT & 0 & 0 & 0  \\ \hline
  QuanTA & 0.99 & 0.99 & 0.99 \\
  \hline 
\end{tabular}
\caption{Comparison of the acceptance rates of swap moves for the PT algorithm and QuanTA targeting the Twenty dimensional distribution given in (\ref{eq:twentydimrepar3mode}) and setup with the ambitious inverse temperature schedule given by $\{ 1,0.002,0.002^2 ,0.002^3 \}$.}
\label{tab:20dcomp}
\end{center}
\end{table}

The temperature schedule choice that induces a 0.234 swap acceptance rate between consecutive temperature levels for this example using the PT algorithm indicates a geometric schedule with a 0.58 common ratio. This is in stark contrast to the 0.002 ratio that is evidently underambitious for QuanTA. Indeed, to reach the allocated hot state of $\beta=0.002^3$ then the PT algorithm would need 36 temperature levels in contrast to the 4 that sufficed for QuanTA.

\subsection{Five-dimensional Non-canonical Example}
\label{reparam5DexNonC}

Leaving the canonical symmetric mode setting, the following example has a five dimensional Gaussian mixture target with even weight to the modes but with different covariance scaling within each mode. The target distribution is given by:
\begin{equation}
	\pi(x) \propto \sum_{k=1}^3 w_k \left[ \prod_{j=1}^{5} \phi(x_j;\mu_k,\sigma_k^2) \right].
	\label{eq:Fivedimrepar3mode}
\end{equation}
 In this example, $(\sigma_1,\sigma_2,\sigma_3)=(0.02,0.01,0.015)$, the marginal mode centres are given by $(\mu_1,\mu_2,\mu_3)=(-20,0,20)$ and all modes are equally weighted with $w_1=w_2=w_3$.

Although at first glimpse this does not sound like a significantly harder problem, or even far from the canonical setting, the differing modal scalings make this a much more complex example. This is due to the lack of preservation of modal weight through power-based tempering, \cite{woodard2009sufficient}, with prototype solutions established in \cite{NTawnThesis}.

The temperature schedule for this example cannot be a simple geometric schedule as in the previous example due to the scaling indifference between the modes. By using an ambitious geometric schedule, the clustering was very unstable early on and this often led to an inability to establish mode centres for the run. Instead, a mixture of geometric schedules was used with an ambitious spacing for the coldest levels  and then a less ambitious spacing for the hotter levels. For the four coldest states an ambitious geometric schedule with $0.08$ common ratio was used. A further 8 hotter levels were added using a conservative  geometric schedule with ratio $0.4$. Hence the schedule was given by:
\begin{equation}
	\Delta=\{ 1,0.08,0.08^2,0.08^3,0.4^9, 0.4^{10},\ldots,0.4^{15},0.4^{16} \}.
	\label{eq:fiveschedule}
\end{equation}
 For the QuanTA scheme, the transformation moves were used for swap moves between the coldest 7 levels and standard swap moves were used otherwise.

Figure~\ref{Fig:FivedimREPARAMproblemplottrace} shows two representative trace plots of the target state chain for a run of the PT and QuanTA algorithms respectively. There is a clear improvement in the inter-modal mixing for the QuanTA scheme; albeit far less stark than that in the canonical one-dimensional and twenty-dimensional examples already shown. The run using the standard PT scheme fails to explore the state space. The QuanTA scheme establishes a chain that is able to explore the state space but does appear to have a bit of trouble during burn-in; mixing is good therein.
\begin{figure}[h]
\begin{center}
\includegraphics[keepaspectratio,width=10cm]{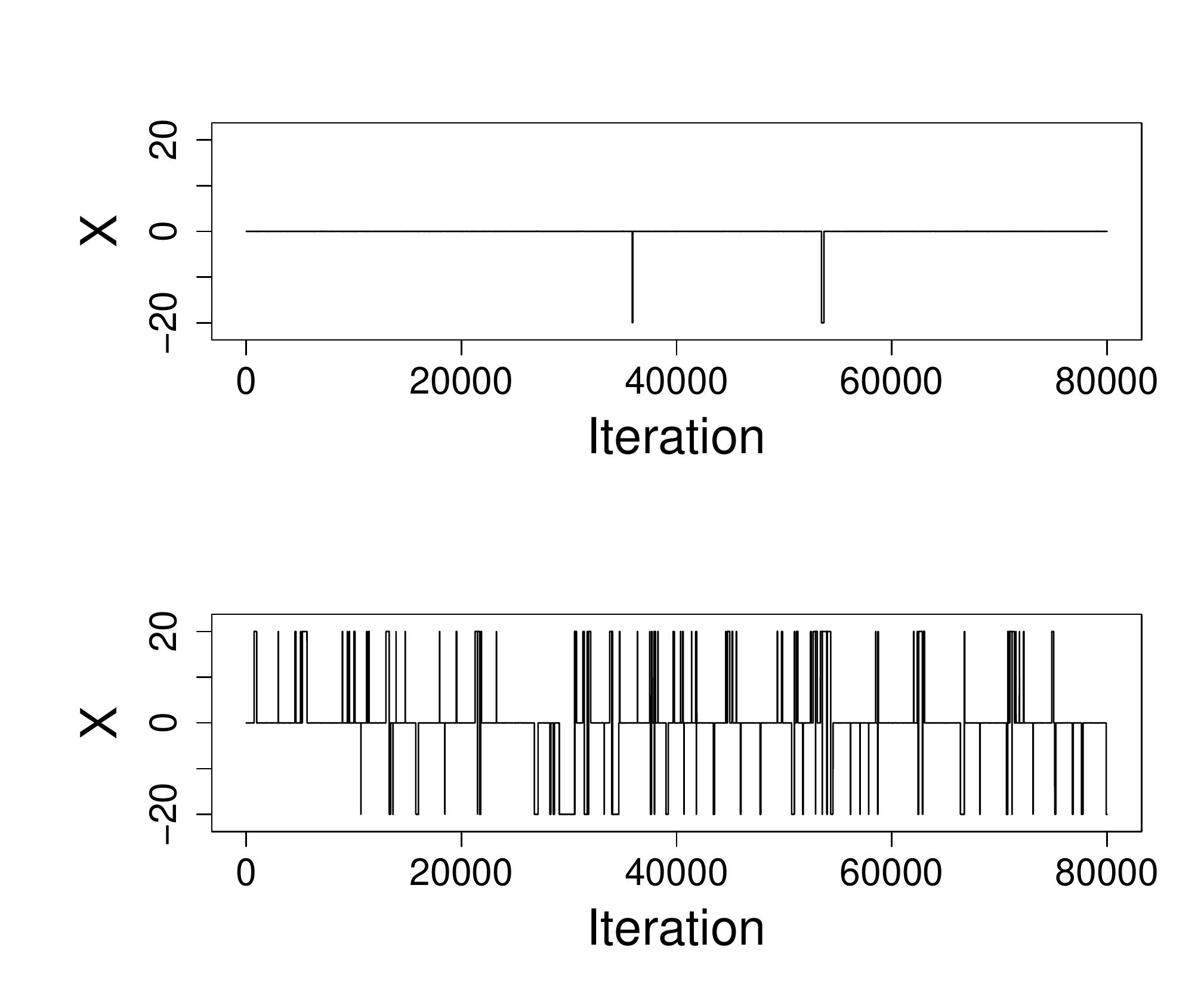}
\caption{Trace plots of the first component of the five dimensional cold state chains for representative runs of the PT and QuanTA schemes respectively. Note the difference in inter-modal mixing between the QuanTA scheme and the PT scheme which struggles to escape the initialisation mode.}
\label{Fig:FivedimREPARAMproblemplottrace}
\end{center}
\end{figure}

The consecutive swap acceptance rates between the 12 levels are given in Table~\ref{tab:20dcomp}. Clearly there is very poor mixing through the 4 coldest states for the PT algorithm. In contrast the QuanTA scheme has solid swap acceptance rates through the coldest levels but, unlike the previous examples, they are not all close to 1. 
\begin{table}[h]
\begin{center}
\begin{tabular}{|c||c|c|c|c|c|c|}
  \hline
  Swap location:  & 1& 2 & 3 &4&5 &6 \\ \hline \hline
  PT &  0.001 & 0.0161 & 0.0138 & 0.469 & 0.317 & 0.348   \\ \hline
  QuanTA & 0.446 & 0.970 & 0.997 & 0.999 & 0.999 & 0.999\\\hline \hline
	Swap location: &7&8&9&10&11 &- \\ \hline \hline
	PT  & 0.328 & 0.334 & 0.359 & 0.324 & 0.327&- \\ \hline
	QuanTA  & 0.285 & 0.285 & 0.285 & 0.285 & 0.302&-\\ \hline
\end{tabular}
\caption{Comparison of the acceptance rates of swap moves for the PT and new QuanTA algorithm targeting the five dimensional distribution given in (\ref{eq:Fivedimrepar3mode}) and setup with the ambitious inverse temperature schedule given in (\ref{eq:fiveschedule}). Note that for QuanTA, the reparametrised swap move was only used for swaps in the coldest 7 levels.}
\label{tab:5dcomp}
\end{center}
\end{table}

This example is both positive (showing the improved mixing using the QuanTA scheme on a hard example) but also serves as a warning for the degeneracy of both the PT and new QuanTA schemes when using power-based tempering on a target outside of the canonical symmetric mode setting.

\subsection{The Computational Cost of QuanTA}
\label{subsec:CompQuan}

It is important to analyse the computational cost of QuanTA. To be an effective algorithm the inferential gains of QuanTA per iteration should not be outweighed by the increase in run-time.

The analysis uses the runs of the one and twenty-dimensional examples, given above, using both the QuanTA and PT approaches. The  algorithms were  setup the same as in the ambitious versions of the spacing schedules in each case.

The key idea is to first establish the total run-time, denoted $R$, in each case. Typically one looks to compare the time-standardised Effective Sample Size (ESS). In this case it is natural to take the acceptance rate as a direct proxy for the effective sample size. This is due to the fact that the target distributions have symmetric modes with equal weights. Hence the acceptance rate between consecutive temperature levels  dictates the performance of the algorithm; in particular the quality of inter-modal mixing. 

To this end, taking the first level temperature swap acceptance rate, denoted $A$, the runs are compared using  run-time standardised acceptance rates i.e.\ $A/R$. 

Note that in both dimensional cases, the output from  QuanTA is 100 times larger due to the use of 100 schemes running in parallel. Hence, for a standardised comparison the time was divided by 100. Therefore, in what follows in this section, when the run-time, $R$, of the QuanTA approach is referred to, this means the full run-time divided by 100. The fairness of this is discussed below.

\begin{table}[h]
\begin{center}
\begin{tabular}{|c||c|c|}
  \hline
  Algorithm  & PT & QuanTA  \\ \hline \hline
  Run-time (sec) &  5.60 &  8.01   \\ \hline
  Swap Rate &  0.06  & 0.99   \\ \hline
	$A/R$ &  0.01 &   0.12 \\\hline
\end{tabular}
\caption{Complexity comparisons between QuanTA and PT for the one-dimensional example.}
\end{center}
\end{table}

\begin{table}[h]
\begin{center}
\begin{tabular}{|c||c|c|}
  \hline
  Algorithm  & PT & QuanTA  \\ \hline \hline
  Run-time (sec) &  8.00 &  12.79   \\ \hline
  Swap Rate &  0.00  & 0.99   \\ \hline
	$A/R$ &  0.00 &   0.08 \\\hline
\end{tabular}
\caption{Complexity comparisons between QuanTA and PT for the twenty-dimensional example.}
\end{center}
\end{table}

In both cases the QuanTA approach has a longer run-time to generate the same amount of output; as would be expected due to the added cost of clustering. Indeed, it takes approximately 1.5 times longer to generate the ``same amount of output''. 

However, the temperature swap move  acceptance rates are  16.5 and $\infty$ times better respectively when using the QuanTA approach. Using the acceptance rate as a proxy for  effective sample size then the quantity $A/R$ is the fundamental value for comparison. In both cases the QuanTA approach shows a significant improvement over the PT approach.

There are issues with the fairness of this comparison:
\begin{itemize}
  \item By standardising the run-time of QuanTA by the number of parallel schemes is not fully fair since it is sharing out the clustering expense between schemes. 
	\item The spacings are too ambitious for the PT approach meaning that the acceptance rates are very low. For a complete analysis one should run the PT algorithm on its optimal temperature schedule and then use the time-standardised effective sample size from each of the optimised algorithms.
\end{itemize}

The empirical computational studies are favourable to the QuanTA approach. This is for a couple of examples that are canonical for QuanTA. Outside of this canonical setting the improvements from running QuanTA will be less obvious.

\section{Conclusion and Further work}

The prototype QuanTA approach utilises a non-centred transformation approach to accelerate the  transfer of mixing information from the rapidly mixing ``hot'' state to aid the inter-modal mixing in the target ``cold'' state. Examples show that this novel algorithm has the potential to dramatically improve the inferential gains; particularly in settings where the modes are similar to a Gaussian in structure.

The accompanying theoretical results that are given in Section~\ref{sec:optscale} show that in a generic non-Gaussian setting the QuanTA approach can still exhibit accelerated mixing through the temperature schedule. Although the inverse temperature spacings are generally still $\mathcal{O}(d^{-1/2})$ there is a higher order behaviour exhibitted in the mixing for large (i.e.\ cold) values of the inverse temperature $\beta$. This suggests that the QuanTA approach will be powerful for accelerating the mixing through the colder levels of the temperature schedule for a typical smooth target.

It is clear that there are interesting questions to be addressed and further work needed before  QuanTA can be considered practical in a real data problem. In terms of optimising the computational expense, it has been shown that parallelisation of the PT algorithm can give significant practical gains, \cite{VanDerwerken2013};  by design QuanTA is also highly parallelisible. An impracticality of the current clustering method used is that is requires prior specification of the number of modes, $K$ which is likely to be unknown and would need  online-estimation as part of the clustering process. An interesting question is whether using colder levels along with the weighted clustering would help to aid the stability of the clustering once invariance is reached for the population. Indeed, the mixing at these auxiliary super cold levels should be very fast due to the QuanTA exhibiting higher order behaviour in these modes. The other interesting question is regarding the robustness of the method in heavier tailed modes, when the Gaussian approximation to the mode can be poor. Consider the setting of a univariate Laplace distribution and observe that the QuanTA style transformation never agrees with the ideal CDF preserving transformation. Some initial ideas and details of this further work can be found in \cite{NTawnThesis}.

\section{Appendix}
\label{sec:proofQuanTA}

This section gives the proof details of the results in Section~\ref{sec:scalingstheorem}. Firstly, some key notation is introduced that will be useful throughout this section.

\begin{definition} \label{def:Keynotations}
Denote:
\begin{itemize}
\item $ B= \log \left( \frac{f_d^{\beta'}(g(x,\beta,\beta^{'}))f_d^{\beta}(g(y,\beta^{'},\beta))}{f_d^{\beta'}(y)f_d^{\beta}(x)}  \right)$;
\item $h(x) := \log \left( f(x) \right)$;
\item $k(x):=(x-\mu)h'(x)$; 
\item and $r(x):=(x-\mu)^2h''(x).$
\end{itemize}
Then define
\begin{eqnarray}
M(\beta)&=&\mathbb{E}_{\beta}(h(z)) \\
S(\beta)&=&\mathbb{E}_{\beta}(k(z))  \\
R(\beta)&=&\mathbb{E}_{\beta}(r(z)-k(z)), \label{eq:Rbetadef}
\end{eqnarray}
where all expectations are with respect to the distribution $\frac{f^\beta (x)}{Z_{\beta}}$ where $Z_{\beta}=\int f^\beta (z) dz$.
\end{definition}


\begin{proposition}
Under the notation and assumptions of Theorem~\ref{Thr:optscalQuanTA} and definition~\ref{def:Keynotations} then it can be shown that
\begin{eqnarray}
I(\beta):=M'(\beta)
=\mathrm{Var}_\beta(h(x)). \label{eq:Ibeta}
\end{eqnarray}
and
\begin{eqnarray}
S(\beta)  =-\frac{1}{\beta}, \label{eq:Sbeta}
\end{eqnarray}
which trivially gives that
 \begin{equation}
V(\beta):=S'(\beta)= \frac{1}{\beta^2}. \label{eq:vbeta}
\end{equation}
\end{proposition}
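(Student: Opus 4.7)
The plan is to compute each of $M'(\beta)$ and $S(\beta)$ by direct differentiation / integration by parts against the $\beta$-tempered density $f^\beta/Z_\beta$, and to obtain $V(\beta)$ as an immediate consequence of $S(\beta)=-1/\beta$.

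For $I(\beta)=M'(\beta)=\mathrm{Var}_\beta(h(x))$, I would write
\[
M(\beta)=\frac{1}{Z_\beta}\int h(z)\,f^\beta(z)\,dz,\qquad Z_\beta=\int f^\beta(z)\,dz,
\]
and differentiate under the integral sign (justified by the regular variation assumption \eqref{eq:reguvar}, which gives uniform integrability of $h(z)f^\beta(z)$ and its $\beta$-derivative on compacts). Using $\partial_\beta f^\beta(z)=h(z)f^\beta(z)$ and the quotient rule gives
\[
M'(\beta)=\mathbb{E}_\beta[h(z)^2]-\frac{Z_\beta'}{Z_\beta}\,M(\beta).
\]
A separate computation gives $Z_\beta'=\int h(z)f^\beta(z)\,dz$, so $Z_\beta'/Z_\beta=M(\beta)$, and the right‐hand side collapses to $\mathbb{E}_\beta[h(z)^2]-M(\beta)^2=\mathrm{Var}_\beta(h(z))$, as desired.

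For $S(\beta)=-1/\beta$, the key identity is $f^\beta(z)h'(z)=\tfrac{1}{\beta}\frac{d}{dz}f^\beta(z)$, which lets me rewrite
\[
S(\beta)=\frac{1}{Z_\beta}\int (z-\mu)h'(z)\,f^\beta(z)\,dz=\frac{1}{\beta Z_\beta}\int (z-\mu)\,\frac{d}{dz}f^\beta(z)\,dz.
\]
Integrating by parts gives
\[
\int (z-\mu)\,\frac{d}{dz}f^\beta(z)\,dz=\bigl[(z-\mu)f^\beta(z)\bigr]_{-\infty}^{\infty}-\int f^\beta(z)\,dz=-Z_\beta,
\]
where the boundary term vanishes because the regular variation condition \eqref{eq:reguvar} with $\alpha>0$ forces $f^\beta(z)=e^{-\beta H(z)}$ to decay faster than $|z|^{-1}$ at infinity. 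Hence $S(\beta)=-1/\beta$, and differentiating in $\beta$ yields $V(\beta)=S'(\beta)=1/\beta^2$.

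The only non-routine issue is the justification for differentiating under the integral sign in the computation of $M'(\beta)$, and the vanishing of the boundary term in the integration by parts for $S(\beta)$; both follow from \eqref{eq:reguvar}, which ensures that $f^\beta$ and all the integrands involved (as well as their $\beta$-derivatives) are dominated by rapidly decaying functions. Once these are noted the chain of identities is elementary, and since all three quantities $I(\beta)$, $S(\beta)$, $V(\beta)$ were used in Theorem~\ref{Thr:optscalQuanTA} only through the expressions above, the proposition is established.
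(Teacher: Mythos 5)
Your proposal is correct and follows essentially the same route as the paper: the identity $I(\beta)=\mathrm{Var}_\beta(h(x))$ is the routine differentiation-under-the-integral computation (which the paper simply cites from Atchad\'e et al.), and your derivation of $S(\beta)=-1/\beta$ via $h'(z)f^\beta(z)=\tfrac{1}{\beta}\tfrac{d}{dz}f^\beta(z)$ and integration by parts, with the boundary term killed by the regular-variation assumption, is exactly the paper's argument.
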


\begin{proof}
The proof of \eqref{eq:Ibeta} is routine and can be found in \cite{atchade2011towards}. The derivation of \eqref{eq:Sbeta} is less obvious using integration by parts:
\begin{eqnarray*}
S(\beta)&=&\int (x-\mu)  (\log f)'(x) \frac{f^\beta (x)}{Z_{\beta}}dx \\
&=&  \int (x-\mu)  f'(x) \frac{f^{\beta-1} (x)}{Z_{\beta}}dx \\
&=& \cancelto{~0}{\left[ \frac{(x-\mu)}{\beta} \frac{f^{\beta} (x)}{Z_{\beta}}  \right]_{-\infty}^{-\infty}}-\int \frac{1}{\beta} \frac{f^{\beta} (x)}{Z_{\beta}}dx 
\end{eqnarray*}
\end{proof}



\subsection{Proof of Theorem~\ref{Thr:optscalQuanTA} }

This section derives 3 key results that are specific to deriving the result in Theorem~\ref{Thr:optscalQuanTA}. Lemma~\ref{lemm:Taylorexpand} will establish a Taylor expanded form of the log acceptance ratio of a temperature swap move that will prove to be asymptotically useful. Lemma~\ref{lem:asymgaussiy} will then establish the limiting Gaussianity of this logged acceptance ratio and finally, Lemma~\ref{lem:ESJDOPT} completes the proof of Theorem~\ref{Thr:optscalQuanTA} by establishing the optimal spacings and associated optimal acceptance rates required.

\begin{lemma}[QuanTA Log-Acceptance Ratio] \label{lemm:Taylorexpand}
Under the notation and assumptions of Theorem~\ref{Thr:optscalQuanTA} and definition~\ref{def:Keynotations},
\begin{eqnarray}
B &=& \epsilon\left[\sum_{i=1}^d
h(x_i)- h(y_i)+\frac{1}{2} \left( k(y_i)-k(x_i)\right)\right]\nonumber\\
&& +\frac{\epsilon^2}{8\beta} \left[\sum_{i=1}^d r(x_i)-k(x_i)+r(y_i)-k(y_i)\right]+(T_x +T_y).
\end{eqnarray}
where both $T_x\rightarrow 0$ and $T_y\rightarrow 0$ in probability as $d\rightarrow\infty.$
\end{lemma}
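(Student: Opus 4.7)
Since $f_d$ factorises and $g$ acts coordinate-wise (the mode point $\mu$ plays the role of the Gaussian centre in each component), the log-ratio $B$ splits as a sum of one-dimensional contributions: with $u_i=x_i-\mu$, $v_i=y_i-\mu$ and $c=\sqrt{\beta/\beta'}=(1+\epsilon/\beta)^{-1/2}$,
\begin{equation*}
B = \sum_{i=1}^d \Bigl[\underbrace{\beta' h(cu_i+\mu)-\beta h(x_i)}_{=:T(x_i)} + \underbrace{\beta h(c^{-1}v_i+\mu)-\beta' h(y_i)}_{=:T'(y_i)}\Bigr].
\end{equation*}
The plan is to expand each summand to second order in $\epsilon$ and collect the leading coefficients using the notation $k,r$ from Definition~\ref{def:Keynotations}.

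First I would expand the scale factors: $c = 1 - \tfrac{\epsilon}{2\beta}+\tfrac{3\epsilon^2}{8\beta^2}+O(\epsilon^3)$ and $c^{-1}=1+\tfrac{\epsilon}{2\beta}-\tfrac{\epsilon^2}{8\beta^2}+O(\epsilon^3)$, so that $cu_i-u_i$ and $c^{-1}v_i-v_i$ are explicit power series in $\epsilon$ whose leading terms are $\mp \epsilon u_i/(2\beta)$ and $\pm \epsilon v_i/(2\beta)$. Taylor expanding $h(cu_i+\mu)$ about $x_i$ and $h(c^{-1}v_i+\mu)$ about $y_i$ to second order, multiplying by $\beta'=\beta+\epsilon$ and $\beta$ respectively, and cancelling the $\beta h(x_i)$, $\beta' h(y_i)$ terms, I would obtain by routine algebra
\begin{equation*}
T(x_i)=\epsilon h(x_i)-\tfrac{\epsilon}{2}k(x_i)+\tfrac{\epsilon^2}{8\beta}\bigl(r(x_i)-k(x_i)\bigr)+\rho_i,
\end{equation*}
and an analogous identity for $T'(y_i)$ with $y_i$ in place of $x_i$, $+\tfrac\epsilon 2 k(y_i)$ replacing $-\tfrac\epsilon 2 k(x_i)$, and sign-flipped leading $\epsilon$ term. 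Summing over $i$ produces exactly the two displayed sums in the lemma, with $T_x=\sum_i \rho_i$ and $T_y=\sum_i \rho'_i$.

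The key step, and the main obstacle, is to show $T_x,T_y \to 0$ in probability. Each remainder $\rho_i$ collects (i) the explicit $O(\epsilon^3)$ pieces coming from the higher-order $\epsilon$-terms in $c$ and its interaction with the $\beta+\epsilon$ prefactor, and (ii) the Taylor remainder $\tfrac{1}{6}h'''(\xi_i)(cu_i-u_i)^3$ from truncating the expansion of $h$. Both are of the shape $\epsilon^3\cdot P_i$ where $P_i$ is a polynomial in $u_i$ of degree at most $3$ with coefficients depending on bounded derivatives of $h$ evaluated at $x_i$ or at an intermediate point; assumption~\eqref{eq:quantafourthass} provides the uniform bound on $h''''$ needed to control $h'''$ along the segment by the value at the endpoint up to a bounded correction, and regular variation~\eqref{eq:reguvar} ensures $\mathbb E_\beta[|u_i|^j h^{(\ell)}(x_i)]<\infty$ for the relevant $j,\ell$. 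With $\epsilon=\ell/d^{1/2}$ this gives
\begin{equation*}
\mathbb{E}_{\pi_n}\bigl[|T_x|\bigr]\ \le\ d\cdot \epsilon^3 \cdot C\ =\ O(d^{-1/2}) \longrightarrow 0,
\end{equation*}
and Markov's inequality delivers convergence in probability; the same argument works for $T_y$.

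Thus the main work is entirely in step (iii): carefully writing the remainder in a form where the bounded-derivative and regular-variation hypotheses can be invoked to control the sum of $d$ error terms against the $\epsilon^3$ scaling. The algebraic collection of the $\epsilon$ and $\epsilon^2$ coefficients is mechanical once the expansions of $c$, $c^{-1}$, and $h$ are substituted.
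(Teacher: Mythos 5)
Your proposal is correct and follows essentially the same route as the paper: a second-order Taylor expansion in $\epsilon$ of each coordinate's contribution (the paper differentiates $\alpha_x(\epsilon)=h(g(x,\beta,\beta+\epsilon))$ directly in $\epsilon$, while you expand the scale factor $c$ and then $h$ in the space variable, which yields the identical coefficients $h-\tfrac12 k$ and $\tfrac{1}{8\beta}(r-k)$), followed by the same remainder control via the bounded fourth derivative of $\log f$, regular variation to guarantee finite moments, and Markov's inequality against the $\epsilon^3=\ell^3 d^{-3/2}$ scaling of the $d$ error terms. The only cosmetic imprecision is that the controlled remainder polynomial in $u_i$ has degree up to four (not three) once $h'''(\xi_i)$ is bounded through $h''''$, but the moment assumptions cover this, so nothing is missing.
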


\begin{proof}
By taking logarithms it is immediate that
\begin{eqnarray}
B &=&  \sum_{i=1}^d [\beta' h (g(x_i,\beta,\beta^{'})) - \beta h(x_i)]+ \sum_{i=1}^d[\beta h(g(y_i,\beta^{'},\beta)) -\beta' h (y_i) ] \nonumber \\
&=:& H_\beta^{\beta'} (\mathbf{x}) +H_{\beta'}^{\beta} (\mathbf{y}) \label{eq:Bterms}.
\end{eqnarray}

With the aim being to derive the asymptotic behaviour of the log acceptance ratio then the next step is to use Taylor expansions (in $\epsilon$) to appropriate order so that the asymptotic behaviour of $B$ can be understood.

 For notational convenience, the following will be used:
\begin{itemize}
	\item  Making $h(g(x,\beta,\beta^{'}))$  explicitly dependent on $\epsilon$ \[ \alpha_x(\epsilon):= h(g(x,\beta,\beta^{'}))= \log \left[  f \left(  \left(\frac{\beta}{\beta+\epsilon} \right)^{1/2} (x-\mu) +\mu\right) \right]. \] 
	\item  Denote \[  d_x(\epsilon):= \left(\frac{\beta}{\beta+\epsilon} \right)^{1/2} (x-\mu) +\mu. \] 
\end{itemize}

By Taylor series expansion in $\epsilon$, for  fixed $x$, with Taylor remainder correction term denoted by $\xi_x$ such that $0<\xi_x<\epsilon$:
\begin{eqnarray}
~~~~~h(g(x,\beta,\beta^{'}))=\alpha_x(\epsilon) = \alpha_x(0)+\epsilon\alpha_x'(0)+\frac{\epsilon^2}{2}\alpha_x''(0)+\frac{\epsilon^3}{6}\alpha_x'''(\xi_x), \label{eq:expanx}
\end{eqnarray}
where
\begin{eqnarray}
\alpha_x'(\epsilon)&=& -\frac{(x-\mu)}{2}\frac{\beta^{1/2}}{(\beta+\epsilon)^{3/2}}(\log f)'(d_x(\epsilon)), \label{eq:rigor1}\\
\alpha_x''(\epsilon)&=& \frac{(x-\mu)^2}{4}\frac{\beta}{(\beta+\epsilon)^{3}}(\log f)''(d_x(\epsilon)) \label{eq:rigor2}\\
&&+\frac{3(x-\mu)}{4}\frac{\beta^{1/2}}{(\beta+\epsilon)^{5/2}}(\log f)'(d_x(\epsilon)),  \nonumber\\
\alpha_x'''(\epsilon)&=& -\frac{(x-\mu)^3}{8}\frac{\beta^{3/2}}{(\beta+\epsilon)^{9/2}}(\log f)'''(d_x(\epsilon)) \label{eq:rigor3}\\
&&-\frac{9(x-\mu)^2}{8}\frac{\beta}{(\beta+\epsilon)^{4}}(\log f)''(d_x(\epsilon)) \nonumber\\
&&-\frac{15(x-\mu)}{8}\frac{\beta^{1/2}}{(\beta+\epsilon)^{7/2}}(\log f)'(d_x(\epsilon)).\nonumber
\end{eqnarray}
As a preview to the later stages of this proof,  the terms up to second order in $\epsilon$ dictate the  asymptotic distribution of $B$. However, to show that the higher order terms ``disappear'' in the limit as $\epsilon \rightarrow 0$ then a careful analysis is required.  Thus the  next step is to establish that, under the assumptions made above, the higher order terms converge to zero in probability.

To this end, a careful analysis of $\alpha_x'''(\cdot)$ is undertaken. Firstly, it will be shown that $\left|\mathbb{E}_\beta[\alpha_x'''(\xi_x)] \right|$ is bounded; then application of Markov's inequality will establish that the higher order terms converge to zero in probability as $d \rightarrow \infty$.  Define
\[ \eta_\epsilon := \left[  \left( \frac{\beta}{\beta+\epsilon} \right) ^ \frac{1}{2}-1 \right]   \]
so that
\[   d_x(\epsilon) -x =\left[  \left( \frac{\beta}{\beta+\epsilon} \right) ^ \frac{1}{2}-1 \right] (x-\mu) :=\eta_\epsilon (x-\mu),\]
which has the property that $\eta_\epsilon\rightarrow 0 ~~\mbox{as}~~ d\rightarrow \infty$ and 
 $|\eta_\epsilon| \le 1$.

Then, with Taylor remainder correction terms denoted $\xi^\epsilon_1,\xi^\epsilon_2,\xi^\epsilon_3$ such that $0<|\xi^\epsilon_k-x|<|   d_x(\epsilon) -x |$
\begin{eqnarray}
(\log f)'(d_x(\epsilon)) &=& (\log f)'(x)+  \eta_\epsilon (x-\mu)(\log f)''(x)   \label{eq:rigorx1}\\
&&+\frac{ {\eta_\epsilon}^2(x-\mu)^2}{2}(\log f)'''(x) \nonumber\\
&& + \frac{ {\eta_\epsilon}^3(x-\mu)^3}{6}(\log f)''''(\xi^\epsilon_1),\nonumber \\
(\log f)''(d_x(\epsilon)) &=& (\log f)''(x)+  \eta_\epsilon (x-\mu) (\log f)'''(x)\label{eq:rigorx2}\\
&& +\frac{ {\eta_\epsilon}^2(x-\mu)^2}{2}(\log f)''''(\xi^\epsilon_2), \nonumber\\
(\log f)'''(d_x(\epsilon)) &=& (\log f)'''(x)+  \eta_\epsilon (x-\mu) (\log f)''''(\xi^\epsilon_3). \label{eq:rigorx3}
\end{eqnarray}
Recall the assumptions \eqref{eq:reguvar} and \eqref{eq:quantafourthass}. Substituting \eqref{eq:rigorx1}, \eqref{eq:rigorx2} and \eqref{eq:rigorx3} into \eqref{eq:rigor3}; evaluating the expectation with respect to $X\sim f^\beta$ and for convenience denoting $|x-\mu|$ by $S$  then $\exists~C \in \mathbb{R}_+$
\begin{eqnarray}
\left|\mathbb{E}_\beta[\alpha_x'''(\xi_x)] \right|
	&\le& 	\mathbb{E}_\beta \left[\left|\alpha_x'''(\xi_x)\right|\right]  \nonumber \\
&\le& \mathbb{E}_\beta\Bigg[\frac{S^3}{8}\beta^{-3} |(\log f)'''(d(\xi_x))| \nonumber \\&& +\frac{9S^2}{8} \beta^{-3} |(\log f)''(d(\xi_x))|+\frac{15 S}{8} \beta^{-3}|(\log f)'(d(\xi_x))|\Bigg] \nonumber \\
&\le& \mathbb{E}_\beta\Bigg[ \frac{S^3}{8}\beta^{-3} \left(|(\log f)'''(x)|+S|(\log f)''''(\xi^{\xi_x}_3)|\right) \nonumber\\
&& +\frac{9S^2}{8} \beta^{-3} \Bigg(|(\log f)''(x)|+S|(\log f)'''(x)|\nonumber \\
&&+\frac{|x|^2}{2}|(\log f)''''(\xi^{\xi_x}_2)|\Bigg) +\frac{15 S}{8} \beta^{-3}\Bigg(|(\log f)'(x)|\nonumber \\ 
&&+S|(\log f)''(x)|+\frac{S^2}{2}|(\log f)'''(x)|\nonumber \\
&&+\frac{S^3}{6}|(\log f)''''(\xi^{\xi_x}_1)|\Bigg) \Bigg]
\le C \label{eq:alphaboundx}
\end{eqnarray}
where the first three inequalities are from the direct application of the triangle inequality (with the second also using the boundedness of  $\eta_\epsilon$); whereas the final inequality arises from both the finiteness of expectations of the terms involving derivatives of order three or below (this is due to the regularly varying tails of $\log (f(\cdot))$) and the assumption that $|(\log f)''''(\cdot)|<M$.

Using \eqref{eq:expanx}, with substitution of terms from \eqref{eq:rigor1}, \eqref{eq:rigor2} and \eqref{eq:rigor3},  $H_\beta^{\beta'} (\mathbf{x})$  can be expressed as
\begin{eqnarray}
H_\beta^{\beta'} (\mathbf{x})  &=& \sum_{i=1}^d(\beta+\epsilon) \left[ \alpha_{x_i}(\epsilon) - \beta  \alpha_{x_i}(0) \right] \nonumber \\
&=&\epsilon \sum_{i=1}^d \left[\alpha_{x_i}(0)+\beta \alpha_{x_i}'(0)\right] + \epsilon^2 \sum_{i=1}^d \left[\frac{\beta}{2}\alpha_{x_i}''(0)+ \alpha_{x_i}'(0)\right]\nonumber\\
&&~~+\epsilon^3 \sum_{i=1}^d \left[\frac{1}{2}\alpha_{x_i}''(0)+\frac{\beta}{6}\alpha_{x_i}'''(\xi_{x_i})\right] + \epsilon^4 \sum_{i=1}^d \frac{1}{6} \alpha_{x_i}'''(\xi_{x_i}).
\end{eqnarray}
By \eqref{eq:alphaboundx} and using the iid nature of the $x_i's$ and using Markov's inequality then $\forall \delta>0$
\begin{eqnarray}
\delta \mathbb{P}\Bigg(  \Bigg| \epsilon^3 \sum_{i=1}^d \Bigg[\frac{1}{2}\alpha_{x_i}''(0) &+&\frac{\beta}{6}\alpha_{x_i}'''(\xi_{x_i})\Bigg]  \Bigg|>\delta \Bigg)\nonumber\\
&<&\mathbb{E}\left(  \left| \frac{\ell^3}{d^{3/2}} \sum_{i=1}^d \left[\frac{1}{2}\alpha_{x_i}''(0)+\frac{\beta}{6}\alpha_{x_i}'''(\xi_{x_i})\right]  \right| \right) \nonumber\\
&\le& \frac{\ell^3}{d^{1/2}}\left[ \frac{1}{2}\mathbb{E}\left(  |\alpha_{x_i}''(0)| \right)+ \frac{\beta}{6}C\right] \rightarrow 0 ~~\mbox{as}~~ d\rightarrow \infty.\nonumber
\end{eqnarray}
Thus, \[ \epsilon^3 \sum_{i=1}^d \left[\frac{1}{2}\alpha_{x_i}''(0)+\frac{\beta}{6}\alpha_{x_i}'''(\xi_{x_i})\right]\rightarrow 0 ~~\mbox{in probability as}~~ d\rightarrow \infty.\]
 By identical methodology, as $d\rightarrow \infty$ \[  \epsilon^4 \sum_{i=1}^d \frac{1}{6} \alpha_{x_i}'''(\xi_{x_i}) \rightarrow 0 ~~\mbox{in probability}.\]
Consequently,
\begin{eqnarray}
H_\beta^{\beta'} (\mathbf{x}) &=& \epsilon \left[\sum_{i=1}^d h(x_i)-\frac{1}{2}(x_i-\mu)h'(x_i)\right] \nonumber\\
&& +\frac{\epsilon^2}{8\beta} \left[\sum_{i=1}^d (x_i-\mu)^2h''(x_i)-(x_i-\mu)h'(x_i)\right] + T_x \label{eq:keyx}
\end{eqnarray}
where 
\[ T_x = \epsilon^3 \sum_{i=1}^d \left[\frac{1}{2}\alpha_{x_i}''(0)+\frac{\beta}{6}\alpha_{x_i}'''(\xi_{x_i})\right] + \epsilon^4 \sum_{i=1}^d \frac{1}{6} \alpha_{x_i}'''(\xi_{x_i})\]
with $T_x \rightarrow 0$ in probability as $d\rightarrow \infty$.

Now denoting $h(g(y,\beta^{'},\beta))$ as
\[   \alpha_y(\epsilon):=h(g(y,\beta^{'},\beta))=  \log \left[  f \left(  \left(\frac{\beta+\epsilon}{\beta} \right)^{1/2} (y-\mu)+\mu \right) \right],  \]
the Taylor series expansion in $\epsilon$, for  a fixed $y$, with Taylor truncation term denoted by $\xi_y$ such that $0<\xi_y<\epsilon$ is given by
\begin{eqnarray}
~~~h(g(y,\beta^{'},\beta))=\alpha_y(\epsilon) = \alpha_y(0)+\epsilon\alpha_y'(0)+\frac{\epsilon^2}{2}\alpha_y''(0)+\frac{\epsilon^3}{6}\alpha_y(\xi_y). \label{eq:expany}
\end{eqnarray}
By identical methodology to the above calculation in \eqref{eq:alphaboundx} for $\alpha_x(\cdot)$, it can be shown that $\exists ~C_y\in \mathbb{R}_+$ such that
\begin{eqnarray}
\left|\mathbb{E}_\beta[\alpha_y'''(\xi_y)] \right| &\le& C_y. \label{eq:alphaboundy}
\end{eqnarray}
Hence, using exactly the same methodology as for the $x_i$'s above, then 
\begin{eqnarray}
H_{\beta'}^{\beta} (\mathbf{y})  
&=& -\epsilon \left[\sum_{i=1}^d h(y_i)-\frac{1}{2}(y_i-\mu)h'(y_i)\right] \nonumber\\
&& +\frac{\epsilon^2}{8\beta} \left[\sum_{i=1}^d (y_i-\mu)^2h''(y_i)-(y_i-\mu)h'(y_i)\right] + T_y. \label{eq:keyy}
\end{eqnarray}
where $T_y \rightarrow 0$ in probability as $d\rightarrow \infty$.

Using the notation from Definition~\ref{def:Keynotations} the desired form of $B$ in Lemma~\ref{lemm:Taylorexpand} is reached.
\end{proof}

\begin{lemma}[Asymptotic Gaussianity of the Log-Acceptance Ratio for QuanTA] \label{lem:asymgaussiy}
Under the notation and assumptions of Theorem~\ref{Thr:optscalQuanTA} and Definition~\ref{def:Keynotations}, $B$  is asymptotically Gaussian of the form  $B\dot{\sim} N(\frac{-\sigma^2}{2},\sigma^2)$ where 
\[\sigma^2= 2\ell^2\left[\frac{1}{2} V(\beta) -I(\beta)+\frac{1}{4\beta}R(\beta)\right]. \]
\end{lemma}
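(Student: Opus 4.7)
The plan is to apply Lemma~\ref{lemm:Taylorexpand} and treat $B$ as a sum of three pieces: a leading-order stochastic sum of iid contributions, an $O(\epsilon^2 d)=O(\ell^2)$ deterministic drift, and a vanishing remainder. Since $x_i \overset{iid}{\sim} f^\beta$, $y_i \overset{iid}{\sim} f^{\beta'}$, and the two batches are independent, I would rewrite the first-order piece as $\epsilon\sum_{i=1}^d[U_i + W_i]$ with $U_i = h(x_i) - \tfrac{1}{2}k(x_i)$ and $W_i = -h(y_i) + \tfrac{1}{2}k(y_i)$ and apply the CLT to each batch separately. With $\epsilon = \ell/d^{1/2}$ this gives a Gaussian fluctuation of scale $\ell$, and the Lindeberg/Lyapunov condition is an easy consequence of the regular-variation assumption \eqref{eq:reguvar}, which forces all polynomial moments of $h$ and $k$ under $f^\beta$ to exist.

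For the drift, the expectation of the first-order sum is $\epsilon d [M(\beta) - M(\beta') + \tfrac{1}{2}(S(\beta') - S(\beta))]$. A second-order Taylor expansion in $\epsilon$ together with $M'(\beta) = I(\beta)$ and $S'(\beta) = V(\beta)$ yields an asymptotic contribution of $\ell^2\left[\tfrac{1}{2}V(\beta) - I(\beta)\right]$; the cubic Taylor terms vanish because $\epsilon^3 d \to 0$. The second-order sum in Lemma~\ref{lemm:Taylorexpand}, already of order $\ell^2$, converges in probability by the weak law of large numbers to $\tfrac{\ell^2}{4\beta}R(\beta)$ (using $R(\beta) = \mathbb{E}_\beta[r-k]$ and continuity in $\beta$, since $\beta' \to \beta$). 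Summed, these give the claimed form of the mean, and since $T_x + T_y \to 0$ in probability, Slutsky's theorem then lifts the decomposition to convergence in distribution of $B$ to a Gaussian.

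For the variance, only the first-order stochastic sum contributes at $O(1)$, and by independence of the two batches it yields $2\ell^2\mathrm{Var}_\beta\!\left(h - \tfrac{1}{2}k\right)$ in the limit (using continuity in $\beta$ to replace the $\beta'$ sample variance). To convert this into the expression stated, I would apply the integration-by-parts identity $\mathbb{E}_\beta[g(x)h'(x)] = -\tfrac{1}{\beta}\mathbb{E}_\beta[g'(x)]$ (same technique the paper uses to derive $S(\beta) = -1/\beta$) with $g(x) = (x-\mu)^2 h'(x)$; this gives $\mathrm{Var}_\beta(k) = 2V(\beta) - R(\beta)/\beta$, so $\mathrm{Var}_\beta(h - \tfrac{1}{2}k) = I(\beta) - \tfrac{1}{2}V(\beta) - \tfrac{R(\beta)}{4\beta}$, matching the claim up to sign. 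Consistency of mean and variance via $m = -\sigma^2/2$ can be checked from the reversibility identity $\mathbb{E}[e^B] \to 1$, which is forced on any limiting Gaussian $B$ arising from a detailed-balance acceptance ratio in stationarity.

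The main obstacle is the careful bookkeeping around the $O(\ell^2)$ drift: one must collect contributions from the subleading expectation of the first-order sum (the $M(\beta')-M(\beta)$ and $S(\beta')-S(\beta)$ differences) \emph{together with} the leading contribution of the second-order sum, and verify that the Taylor remainders do not contaminate this order. A secondary technical point is justifying Lindeberg: although $k$ and $r$ grow polynomially, the super-exponential decay of $f^\beta$ under \eqref{eq:reguvar} ensures a Lyapunov bound with $\delta = 1$ suffices, and joint weak convergence of the two independent sums then follows by the Cramér--Wold device.
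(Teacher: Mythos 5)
Your proposal is correct and follows essentially the same route as the paper's proof: the same decomposition of $B$ via Lemma~\ref{lemm:Taylorexpand}, the CLT applied to the iid sums to obtain the limiting mean $\ell^2\left[\frac{1}{2}V(\beta)-I(\beta)+\frac{1}{4\beta}R(\beta)\right]$ and limiting variance $2\ell^2\mathrm{Var}_\beta\left(h-\frac{1}{2}k\right)$, Slutsky's theorem to dispose of $T_x+T_y$, and the same integration-by-parts identity (the paper's Proposition~\ref{cor:minushalsig}) to verify the $m=-\sigma^2/2$ relation. The sign discrepancy you flag is present in the paper itself, whose Proposition~\ref{cor:minushalsig} shows the bracket equals $-\mathrm{Var}_\beta\left(h-\frac{1}{2}k\right)$, so your bookkeeping is consistent with the paper's own computations.
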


\begin{proof}

Recall the form of $B$ from Lemma~\ref{lemm:Taylorexpand}, then making the dimensionality dependence explicit, write $B=W(d)+(T_x+T_y)$ where
\begin{eqnarray}
W(d) &:=& \epsilon\left[\sum_{i=1}^d
h(x_i)- h(y_i)+\frac{1}{2} \left( k(y_j)-k(x_j)\right)\right]\nonumber\\
&& +\frac{\epsilon^2}{8\beta} \left[\sum_{i=1}^d r(x_i)-k(x_i)+r(y_i)-k(y_i)\right] \nonumber
\end{eqnarray}
and $(T_x+T_y) \rightarrow 0$ in probability as $d \rightarrow \infty$. Hence, if it can be shown that $W(d)$ converges in distribution to a Gaussian of the form $N(-c, 2c)$ then by Slutsky's Theorem one can conclude that $B$ converges in distribution to the same Gaussian as the $W$.

To this end, the asymptotic Gaussianity of $W(d)$ is established. First note that due to the iid nature of the $x_i$'s and $y_i$'s respectively then by the standard Central Limit Theorem, e.g. \cite{Durrett2010}, for a sum of iid variables, then asymptotic Gaussianity is immediate where
\begin{equation}
	W(d) \Rightarrow N\left( \mu_W, \sigma^2_W\right)~~\mbox{as}~~d\rightarrow \infty
\end{equation}
where
\[ \mu_W = \lim _{d\rightarrow \infty} \mathbb{E}[W(d)]~~\mbox{and}~~ \sigma^2_W = \lim _{d\rightarrow \infty} \mbox{Var}[W(d)].\] To this end the terms $\mathbb{E}[W(d)]$ and $\mbox{Var}[W(d)]$ are computed.
\begin{eqnarray}
\mathbb{E}[W(d)] &:=& \epsilon\left[\sum_{i=1}^d
M(\beta)-M(\beta+\epsilon) -\frac{1}{2}(S(\beta)-S(\beta+\epsilon))\right]\nonumber\\
&& +\frac{\epsilon^2}{8\beta} \left[\sum_{i=1}^d R(\beta)+R(\beta+\epsilon)\right] \nonumber \\
&=& \epsilon\left[\sum_{i=1}^d
-\epsilon M'(\beta)+\frac{\epsilon}{2}S'(\beta)\right]
 +\frac{\epsilon^2}{8\beta} \left[\sum_{i=1}^d 2R(\beta)\right] +\mathcal{O}(d^{-1/2})\nonumber\\
&&\rightarrow \ell^2\left[\frac{1}{2} V(\beta) -I(\beta)+\frac{1}{4\beta}R(\beta)\right] ~~\mbox{as}~~d \rightarrow \infty. \nonumber
\end{eqnarray}
Similarly,
\begin{equation}
	\mbox{Var}(W(d)) \rightarrow 2 \ell^2  \mathrm{Var}_{\beta}  \left( h(x)-\frac{1}{2}k(x) \right)~~\mbox{as}~~d \rightarrow \infty. \nonumber
\end{equation}
Hence by Slutsky's Theorem then  $B$ is asymptotically Gaussian such that
\begin{equation}
B~~ \dot{\sim}~~ N\left(\ell^2\left[\frac{1}{2} V(\beta) -I(\beta)+\frac{1}{4\beta}R(\beta)\right], 2 \ell^2  \mathrm{Var}_{\beta}  \left( h(x)-\frac{1}{2}k(x) \right)\right). \label{eq:limdis}
\end{equation}
However, this does not obviously have the form required with $B\dot{\sim} N(\frac{-\sigma^2}{2},\sigma^2)$ for some $\sigma^2$. This form is verified with the following Proposition~\ref{cor:minushalsig} and this then completes the proof of Lemma~\ref{lem:asymgaussiy}.

\begin{proposition}
\label{cor:minushalsig}
Under the notation and assumptions of Theorem~\ref{Thr:optscalQuanTA} and Definition~\ref{def:Keynotations} then
\begin{equation}
\ell^2\left[\frac{1}{2} V(\beta) -I(\beta)+\frac{1}{4\beta}R(\beta)\right]=- \ell^2  \mathrm{Var}_{\beta}  \left( h(x)-\frac{1}{2}k(x) \right). \label{coreq:identity}
\end{equation}
\end{proposition}

\begin{proof}
From (\ref{eq:limdis}) then denote
\begin{equation}
	\mu= \ell^2\left[\frac{1}{2} V(\beta) -I(\beta)+\frac{1}{4\beta}R(\beta)\right] \label{eq:muterm1}
\end{equation}
and
\begin{equation}
	\sigma^2= 2\ell^2  \mathrm{Var}_{\beta}  \left( h(x)-\frac{1}{2}k(x) \right). \nonumber
\end{equation}
Then by using the standard properties of variance it is routine to show that
\begin{equation}
	-\frac{\sigma^2}{2}= \ell^2  \left[ -I(\beta) -\frac{1}{4}  \mathrm{Var}_{\beta}(k(x)) + V(\beta)\right]. \label{eq:altvarterm}
\end{equation}
Consequently, equating the terms on the RHS of \eqref{eq:muterm1} and \eqref{eq:altvarterm} shows that if the following can be shown to hold then the required identity in \eqref{coreq:identity} is validated:
\begin{equation}
	\frac{1}{4\beta}R(\beta)= -\frac{1}{4}  \mbox{var}_{\beta}(k(x)) + \frac{1}{2}V(\beta). \label{eq:needequal}
\end{equation}
The LHS and RHS of \eqref{eq:needequal} will be considered separately. The following integration by parts are well defined due to the assumption that $-\log (f(\cdot))$ has regularly varying tails. Starting with the RHS and recalling that from \eqref{eq:Sbeta} $\mathbb{E}_{\beta}(k(x))=-1/\beta$:
\begin{eqnarray}
-\frac{1}{4}  \mbox{var}_{\beta}(k(x)) + \frac{1}{2}V(\beta)&=&-\frac{1}{4} \left[\mathbb{E}_{\beta}(k(x)^2)-\mathbb{E}_{\beta}(k(x))^2\right] + \frac{1}{2\beta^2} \nonumber\\
&=& -\frac{1}{4} \mathbb{E}_{\beta}(k(x)^2)+ \frac{3}{4\beta^2}. \nonumber
\end{eqnarray}
Then, noting that $(\log f)'(x) f^\beta(x)=f'(x)f^{\beta-1}(x)$, and using integration by parts (by first integrating $f'(x)f^{\beta-1}(x)$):
\begin{eqnarray}
\hspace{-1cm}\mathbb{E}_{\beta}(k(x)^2)&=&\int (x-\mu)^2 [(\log f)'(x)]^2 \frac{f^{\beta}(x)}{Z_{\beta}} dx \nonumber \\
&=& \cancelto{~0}{\left[ \frac{(x-\mu)^2}{\beta} (\log f)'(x)\frac{f^{\beta} (x)}{Z_{\beta}}  \right]_{-\infty}^{-\infty}}\nonumber \\
&&-\frac{1}{\beta}  \int \left[(x-\mu)^2 (\log f)''(x)+2(x-\mu) (\log f)'(x)\right]\frac{f^{\beta} (x)}{Z_{\beta}}dx \nonumber\\
&=& -\frac{1}{\beta}\mathbb{E}_{\beta}(r(x))-\frac{2}{\beta} \mathbb{E}_{\beta}(k(x)) = -\frac{1}{\beta}\mathbb{E}_{\beta}(r(x))+\frac{2}{\beta^2}. \label{eq:ek2}
\end{eqnarray}
Collating the above in \eqref{eq:needequal} and \eqref{eq:ek2} then
\begin{equation}
-\frac{1}{4}  \mbox{var}_{\beta}(k(x)) + \frac{1}{2}V(\beta)= \frac{1}{4\beta}\mathbb{E}_{\beta}(r(x))+ \frac{1}{4\beta^2} =\frac{1}{4\beta}R(\beta),\label{eq:idenRHS}
\end{equation}
 where the final equality simply comes from the definition of $R(\beta)$ from\eqref{eq:Rbetadef}.
\end{proof}
\end{proof}

%

\begin{lemma}[Optimisation of the $ESJD_\beta$] \label{lem:ESJDOPT}
Under the notation and assumptions of Theorem~\ref{Thr:optscalQuanTA} and Definition~\ref{def:Keynotations} then the $ESJD_{\beta}$,  is maximised when $\ell$ is chosen to maximise
\begin{equation}
2\ell^2\Phi\left(-\frac{\ell\left[\frac{1}{2} V(\beta) -I(\beta)+\frac{1}{4\beta}R(\beta)\right]^{1/2}}{\sqrt{2}}\right), \nonumber
\end{equation}

Furthermore, for the optimal $\ell$ the corresponding swap move acceptance rate induced between two consecutive temperatures is given by 0.234 (3.s.f).
\end{lemma}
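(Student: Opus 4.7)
The plan is to combine the asymptotic Gaussianity of $B$ established in Lemma~\ref{lem:asymgaussiy} with a standard identity for log-Gaussian Metropolis acceptance probabilities, and then carry out a one-dimensional numerical optimisation.

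First, I would observe that $\alpha_\beta(x,y)=\min(1,e^B)$ is bounded in $[0,1]$ and continuous in $B$, so Lemma~\ref{lem:asymgaussiy} together with the Portmanteau theorem (equivalently bounded convergence) gives
\[
\lim_{d\to\infty}\mathbb{E}_{\pi_n}\!\left[\alpha_\beta(x,y)\right]=\mathbb{E}[\min(1,e^Z)],
\]
where $Z\sim N(-\sigma^2/2,\sigma^2)$ and $\sigma^2=2\ell^2 K$ with $K:=\tfrac{1}{2}V(\beta)-I(\beta)+\tfrac{1}{4\beta}R(\beta)$.

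Second, I would invoke the classical identity
\[
\mathbb{E}[\min(1,e^Z)]=2\Phi(-\sigma/2)\qquad\text{for } Z\sim N(-\sigma^2/2,\sigma^2).
\]
One proves this by splitting the expectation as $\mathbb{E}[e^Z\mathbbm{1}_{\{Z\le0\}}]+\mathbb{P}(Z>0)$; a short exponential change-of-measure argument (the Cameron--Martin shift converting the $N(-\sigma^2/2,\sigma^2)$ law weighted by $e^Z$ into the $N(\sigma^2/2,\sigma^2)$ law) shows both terms equal $\Phi(-\sigma/2)$. Recalling that $\epsilon^2=\ell^2/d$, the limiting (appropriately scaled) $ESJD_\beta$ is therefore proportional to
\[
2\ell^2\,\Phi\!\left(-\frac{\ell\,K^{1/2}}{\sqrt{2}}\right),
\]
which is exactly the expression in the lemma statement.

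Finally, for the optimisation step I would substitute $u=\ell\sqrt{K/2}$, reducing the problem to maximising $u^2\Phi(-u)$ over $u>0$. Differentiating and setting to zero gives the transcendental equation $2\Phi(-u)=u\,\phi(u)$, which is identical to the one arising in the random-walk Metropolis optimal-scaling theory of \cite{roberts1997weak}. Its unique positive solution is $u^{*}\approx 1.19$, giving the asymptotically optimal temperature swap acceptance rate $2\Phi(-u^{*})\approx 0.234$ to three significant figures. The only technical subtlety in the whole argument is the exchange of limit and expectation in the first step, which is immediate from boundedness of $\alpha_\beta$; the remaining computations are entirely standard.
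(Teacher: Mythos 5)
Your proposal is correct and follows essentially the same route as the paper: apply Lemma~\ref{lem:asymgaussiy} together with the identity $\mathbb{E}[1\wedge e^G]=2\Phi(-\sigma/2)$ for $G\sim N(-\sigma^2/2,\sigma^2)$, rescale to obtain the limiting form $2\ell^2\Phi\bigl(-\ell K^{1/2}/\sqrt{2}\bigr)$, and reduce the optimisation to a one-dimensional problem in $u$ whose solution yields the $0.234$ acceptance rate. The only differences are presentational: you justify the exchange of limit and expectation explicitly via Portmanteau, sketch the change-of-measure proof of the Gaussian identity that the paper merely cites, and use the substitution $u=\ell\sqrt{K/2}$ (rather than the paper's $u=\ell K^{1/2}$) so that the stationarity condition $2\Phi(-u)=u\,\phi(u)$ matches the random-walk Metropolis literature directly.
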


\begin{proof}

Letting $\phi_{(m,\sigma^2)}$ denote the density function of a Gaussian with mean $m$ and variance $\sigma^2$ and suppose that $G\sim N(-\frac{\sigma^2}{2},\sigma^2)$ then a routine calculation (which can be found in e.g.\ \cite{roberts1997weak}) shows that
\begin{eqnarray}
	\mathbb{E}(1\wedge e^G) 
	= 2\Phi\left(-\frac{\sigma}{2}\right). \label{eq:expaccrateforGauus}
\end{eqnarray}
Using the result in \eqref{eq:expaccrateforGauus} and Lemma~\ref{lem:asymgaussiy}, then in the limit as $d \rightarrow \infty$ 
\begin{eqnarray}
\lim_{d\rightarrow \infty} (d~ESJD_\beta) = 2\ell^2\Phi\left(-\frac{\ell\left[\frac{1}{2} V(\beta) -I(\beta)+\frac{1}{4\beta}R(\beta)\right]^{1/2}}{\sqrt{2}}\right). \label{eq:ESJDphiforml}
\end{eqnarray}
Substituting
\begin{equation}
u=\ell \left[\frac{1}{2} V(\beta) -I(\beta)+\frac{1}{4\beta}R(\beta) \right]^{1/2}, \nonumber
\end{equation}
and then maximising with respect to $u$  attains an optimising value $u^*$ that doesn't depend on\[\left[\frac{1}{2} V(\beta) -I(\beta)+\frac{1}{4\beta}R(\beta) \right].\] 

Recalling the form of the $ESJD_\beta$ from \eqref{eq:ESJDP12form} , then it is clear that the associated acceptance rate, denoted $(ACC_\beta)$, induced by choosing the any value of $u$ is 
\[ACC_\beta = \mathbb{E}_{\pi_n} \left[ \alpha_{\beta}(x,y) \right] \] which, as established above, in the limit as $d \rightarrow \infty$ is asymptotically given by 
\[ ACC_\beta = 2 \Phi\left(-\frac{ u}{\sqrt{2}}\right) \]
Now it can be shown numerically that for the optimising value  $u^*$ induces
\begin{equation}
		\mbox{ACC}_\beta = 0.234 ~~~(3.s.f). \nonumber
\end{equation}
\end{proof}
%
%
%

\subsection{Lemmata for the Proof of Theorem~\ref{cor:higherorder}} \label{sec:highprof}

Note that in Theorem~\ref{cor:higherorder}, the conditions on $f(\cdot)$ are inherited from the conditions on $f(\cdot)$ from Theorem~\ref{Thr:optscalQuanTA}. This includes the bounded fourth derivatives of $\log(f)$ and the existence of eighth  moments i.e.\ $\mathbb{E}_\beta \left[ X^8 \right]$, which is due to the assumption of regularly varying tails. These will be assumed for the following lemmata.

\begin{lemma}\label{lem:highterm1}
Under the notation and assumptions of Theorems~\ref{Thr:optscalQuanTA} and~\ref{cor:higherorder} and Definition~\ref{def:Keynotations} then 
 \[\frac{1}{2} V(\beta) -I(\beta) =\mathcal{O}\left( \beta^{-k} \right)\]
where  in general $k=\min\left\{2+\gamma, 5/2 \right\}$ but if  $h'''(0)=0$ then $k=\min\left\{2+\gamma, 3 \right\}$.
\end{lemma}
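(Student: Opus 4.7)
The idea is to relate the two quantities via a local Taylor expansion of $h=\log f$ about the mode and then exploit the rescaled density $g_\beta$ from \eqref{eq:gformtherscal}, whose deviation from a standard Gaussian is quantified by assumption \eqref{eq:assumptionscale}. Since $V(\beta)=1/\beta^2$ identically, the task reduces to showing
\begin{equation}
I(\beta)=\mathrm{Var}_\beta(h(X))=\frac{1}{2\beta^2}+\mathcal{O}(\beta^{-k}). \nonumber
\end{equation}
First I would set $a:=-h''(0)>0$ (WLOG $\mu=0$), so that $\sigma_\beta^2:=1/(a\beta)$, and make the change of variables $Y=X\sqrt{a\beta}$, which has density $g_\beta$. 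Centering $h$ at the mode kills the linear term since $h'(0)=0$, and a Taylor expansion gives
\begin{equation}
h(X)-h(0)=-\frac{Y^2}{2\beta}+\frac{h'''(0)}{6a^{3/2}}\frac{Y^3}{\beta^{3/2}}+\frac{h''''(\xi)}{24a^2}\frac{Y^4}{\beta^2}, \nonumber
\end{equation}
with $|h''''(\xi)|\le M$ by \eqref{eq:quantafourthass}.

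Taking variances, the constant $h(0)$ drops out and I get, with $c_3:=h'''(0)/(6a^{3/2})$,
\begin{equation}
I(\beta)=\frac{1}{4\beta^2}\mathrm{Var}_{g_\beta}(Y^2)-\frac{c_3}{\beta^{5/2}}\mathrm{Cov}_{g_\beta}(Y^2,Y^3)+\frac{c_3^2}{\beta^3}\mathrm{Var}_{g_\beta}(Y^3)+E(\beta), \nonumber
\end{equation}
where $E(\beta)$ collects the variance and cross-covariances involving the $Y^4$ remainder. The first term, by assumption \eqref{eq:assumptionscale}, equals $\tfrac{1}{4\beta^2}(2+\mathcal{O}(\beta^{-\gamma}))=\tfrac{1}{2\beta^2}+\mathcal{O}(\beta^{-(2+\gamma)})$, producing the desired matching leading term. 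The $Y^3$-variance term is $\mathcal{O}(\beta^{-3})$ provided $\mathrm{Var}_{g_\beta}(Y^3)$ is bounded in $\beta$, which follows because the regularly varying tail condition \eqref{eq:reguvar} on $h$ gives uniform control on $g_\beta$ moments of all orders relevant here (up to order $8$, say). The remainder $E(\beta)$ is $\mathcal{O}(\beta^{-3})$ by Cauchy--Schwarz using $|h''''|\le M$ and boundedness of $\mathbb{E}_{g_\beta}[Y^k]$ for $k\le 8$.

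The delicate term is the cross-covariance $\mathrm{Cov}_{g_\beta}(Y^2,Y^3)=\mathbb{E}_{g_\beta}[Y^5]-\mathbb{E}_{g_\beta}[Y^2]\mathbb{E}_{g_\beta}[Y^3]$, which multiplies $\beta^{-5/2}$. In the symmetric case $h'''(0)=0$ one has $c_3=0$ and this entire term is absent, so combining the bounds gives $I(\beta)-\tfrac{1}{2\beta^2}=\mathcal{O}(\beta^{-\min\{2+\gamma,3\}})$. In the general case $c_3\ne 0$, the cross-covariance is uniformly bounded in $\beta$ (again by the regularly varying tail control on $g_\beta$ moments), so the term is $\mathcal{O}(\beta^{-5/2})$, and one obtains $I(\beta)-\tfrac{1}{2\beta^2}=\mathcal{O}(\beta^{-\min\{2+\gamma,5/2\}})$, matching the two cases of the lemma.

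The main technical obstacle will be justifying that the moments $\mathbb{E}_{g_\beta}[Y^k]$ for $k\le 8$ are uniformly bounded in $\beta$ (and in particular that $|\mathbb{E}_{g_\beta}[Y^5]|$ stays bounded). This is where the regularly varying tails of $H$ come in: since $H(x)/H(t)\to (x/t)^\alpha$ and the rescaling by $\sqrt{a\beta}$ produces a density $g_\beta$ whose tails are dominated, for large $\beta$, by a quadratic in $y$ from the local Taylor expansion with a uniformly bounded correction. A direct dominated-convergence style argument using the explicit form $g_\beta(y)\propto\exp(\beta h(y/\sqrt{a\beta}))$ and the Taylor bound $\beta h(y/\sqrt{a\beta})\le -y^2/3$ (say) for $|y|\le \delta\sqrt{\beta}$ together with the tail control for $|y|>\delta\sqrt{\beta}$ suffices. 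Once these moment bounds are in place, the estimates above compose straightforwardly to yield the claim.
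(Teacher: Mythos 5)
Your proposal is correct and follows essentially the same route as the paper's proof: reduce to showing $I(\beta)=\tfrac{1}{2\beta^2}+\mathcal{O}(\beta^{-k})$, rescale to $g_\beta$, Taylor-expand $h$ about the mode to fourth order, identify the leading $\tfrac{1}{4\beta^2}\mathrm{Var}_{g_\beta}(Y^2)$ term via assumption \eqref{eq:assumptionscale}, and track the $h'''(0)$ cross term at order $\beta^{-5/2}$ versus the $\beta^{-3}$ remainders. You are in fact slightly more explicit than the paper about needing the $g_\beta$-moments to be bounded \emph{uniformly} in $\beta$, which is a point worth retaining.
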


\begin{proof}
 It has already been established that $V(\beta)=1/\beta^2$ for all distributions. Also, for a  Gaussian density, $f(\cdot)$, $I(\beta)=1/(2\beta^2)$. Since $g_\beta (\cdot)$ approaches the density of a standard Gaussian, $\phi(\cdot)$, as $\beta \rightarrow \infty$, then one expects that $I(\beta)$ would approach $1/(2\beta^2)$ too. Hence, a rigorous analysis of this convergence needs to be established. Note that
\begin{eqnarray}
	I(\beta) &=& \mathrm{Var}_\beta \left[h(X)\right] \nonumber\\
	&=& \int \left(h(x)- \mathbb{E}_\beta [h(X)] \right)^2 \frac{f^\beta(x)}{Z(\beta)} dx \nonumber\\
	&=& \int \left(h\left(\frac{y}{\sqrt{\beta (-h''(0))}}\right)- \mathbb{E}_{g_\beta} \left[h\left(\frac{y}{\sqrt{\beta (-h''(0))}}\right)\right] \right)^2 g_\beta(y) dy 		\label{eq:Ibetterm1} 
\end{eqnarray}
using the change of variable, $X=\frac{Y}{\sqrt{\beta (-h''(0))}}$.
By Taylor expansion of $h$ about the mode point, 0, up to fourth order then
\begin{eqnarray}
h\left(\frac{y}{\sqrt{\beta (-h''(0))}}\right)&=& h(0)-\frac{y^2}{2\beta }\nonumber \\
&&~~+\frac{y^3 h'''(0)}{6\left(\beta (-h''(0))\right)^{3/2}}+\frac{y^4 h''''(\xi_1(y))}{24\left(\beta (-h''(0))\right)^{2}} \label{eq:hexpan11}
\end{eqnarray}
where $\xi_1(\cdot)$ is the truncation term for the Taylor expansion such that $0<|\xi_1(y)|<\left|\frac{y}{\sqrt{\beta (-h''(0))}}\right|$ for all $y$. Using the Taylor expansion form of $h$ and the assumption of bounded fourth derivatives
\begin{eqnarray}
&&\left|\mathbb{E}_{g_\beta} \left[h\left(\frac{Y}{\sqrt{\beta (-h''(0))}}\right)-h(0)+\frac{Y^2}{2\beta }-\frac{Y^3 h'''(0)}{6\left(\beta (-h''(0))\right)^{3/2}}\right] \right|\nonumber\\
&&~~ \leq \mathbb{E}_{g_\beta} \left[\left|\frac{Y^4 h''''(\xi_1(Y))}{24\left(\beta (-h''(0))\right)^{2}}\right|\right] \leq \frac{M}{24\left(\beta (-h''(0))\right)^{2}}\mathbb{E}_{g_\beta} \left[Y^4 \right]=\mathcal{O}\left( \frac{1}{\beta^2} \right) \nonumber
\end{eqnarray}
where $\mathbb{E}_{g_\beta} \left[Y^4 \right]<\infty$ due to the assumption on the existence of moments up to the eighth moment. Thus,
\begin{eqnarray}
	\mathbb{E}_{g_\beta} \left[h\left(\frac{Y}{\sqrt{\beta (-h''(0))}}\right)\right]&=& h(0)-\frac{\mathbb{E}_{g_\beta} \left[Y^2\right] }{2\beta }+\frac{\mathbb{E}_{g_\beta} \left[Y^3 \right] h'''(0)}{6\left(\beta (-h''(0))\right)^{3/2}}\nonumber \\ &&+ \frac{\mathbb{E}_{g_\beta} \left[Y^4 h''''(\xi_1(Y))\right] }{24\left(\beta (-h''(0))\right)^{2}}, \nonumber
\end{eqnarray}
and substituting this into \eqref{eq:Ibetterm1}, along with the Taylor expansion of $h$  to the fourth order given in \eqref{eq:hexpan11}, gives
\begin{eqnarray}
I(\beta)&=& \int \Bigg( h(0)-\frac{y^2 }{2\beta }+\frac{y^3 h'''(0)}{6\left(\beta (-h''(0))\right)^{3/2}}+\frac{y^4 h''''(\xi_1(y))}{24\left(\beta (-h''(0))\right)^{2}} \nonumber\\
&&-\Bigg[  h(0)+\frac{\mathbb{E}_{g_\beta} \left[Y^2\right] h''(0)}{2\beta (-h''(0))}+\frac{\mathbb{E}_{g_\beta} \left[Y^3 \right] h'''(0)}{6\left(\beta (-h''(0))\right)^{3/2}}\nonumber \\ &&+ \frac{\mathbb{E}_{g_\beta} \left[Y^4 h''''(\xi_1(Y))\right] }{24\left(\beta (-h''(0))\right)^{2}}  \Bigg]
\Bigg)^2  g_\beta(y)dy \nonumber\\
&=& \frac{1}{4\beta^2} \int \left(  y^2 -  \mathbb{E}_{g_\beta} \left[Y^2\right] \right)^2 g_\beta(y)dy \nonumber \\
&&+\frac{2h'''(0)}{24\beta^{5/2}(-h''(0))^{3/2}} \int \left(  y^2 -  \mathbb{E}_{g_\beta} \left[Y^2\right] \right)\left(  y^3 -  \mathbb{E}_{g_\beta} \left[Y^3\right] \right)g_\beta(y)dy\nonumber\\
&&   +\mathcal{O}\left( \frac{1}{\beta^3} \right),   \nonumber
\end{eqnarray}
which is finite and well defined due to assumptions 2 and 3. Consequently, in general
\begin{eqnarray}
I(\beta)&=& \frac{1}{4\beta^2}\mathrm{Var}_{g_\beta}\left( Y^2  \right) + \mathcal{O}\left(\frac{1}{\beta^{5/2}}\right), \nonumber
\end{eqnarray}
but in the case that $h'''(0)=0$, which indeed holds in the case that $f$ is symmetric about the mode point, then
\begin{eqnarray}
I(\beta)&=& \frac{1}{2\beta^2}\mathrm{Var}_{g_\beta}\left( Y^2  \right) + \mathcal{O}\left(\frac{1}{\beta^{3}}\right) \nonumber
\end{eqnarray}
and so under the key assumption given in \eqref{eq:assumptionscale}, then 
\begin{equation}
	I(\beta) = \frac{1}{2\beta^2} + \mathcal{O}\left(\frac{1}{\beta^k}\right) \label{eq:Ibetaorder}
\end{equation}
where  in general $k=\min\left\{2+\gamma, 5/2 \right\}$ but if  $h'''(0)=0$ then $k=\min\left\{2+\gamma, 3 \right\}$, and so $\frac{1}{2} V(\beta) -I(\beta)=\mathcal{O}\left(\frac{1}{\beta^k}\right)$.

\end{proof}

\begin{lemma} \label{lem:highterm2}
Under the notation and assumptions of Theorems~\ref{Thr:optscalQuanTA} and~\ref{cor:higherorder} and Definition~\ref{def:Keynotations} then 
 \[ \frac{1}{4\beta}R(\beta)=\mathcal{O}\left( \beta^{-k} \right)\]
where  in general $k= 5/2 $ but if  $h'''(0)=0$ then $k=3$.
\end{lemma}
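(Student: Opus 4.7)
The plan is to compute $R(\beta) = \mathbb{E}_\beta[r(X) - k(X)]$ directly via the same change of variables used in Lemma~\ref{lem:highterm1}. Set $c = -h''(0) > 0$ (which is positive because $0$ is the mode of $f$, giving also $h'(0) = 0$) and write $X = Y/\sqrt{\beta c}$ with $Y \sim g_\beta$. The key algebraic observation is that the leading quadratic terms of $x^2 h''(x)$ and $x h'(x)$ are both $-cx^2$, so they annihilate in the difference $r(x) - k(x)$, producing an integrand whose expansion begins at order $x^3$ rather than $x^2$.

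Concretely, Taylor expansion with Lagrange remainder gives
\[ h'(x) = -cx + \tfrac{1}{2} x^2 h'''(0) + \tfrac{1}{6} x^3 h''''(\xi(x)), \qquad h''(x) = -c + x\, h'''(0) + \tfrac{1}{2} x^2 h''''(\eta(x)), \]
for some $\xi(x), \eta(x)$ between $0$ and $x$. Multiplying the second identity by $x^2$ and the first by $x$ and subtracting, the $-cx^2$ terms cancel and one obtains
\[ r(x) - k(x) = \tfrac{1}{2} x^3 h'''(0) + x^4 \!\left( \tfrac{1}{2} h''''(\eta(x)) - \tfrac{1}{6} h''''(\xi(x)) \right). \]
Substituting $X = Y/\sqrt{\beta c}$ and taking expectations under $g_\beta$ then yields
\[ R(\beta) = \frac{h'''(0)\, \mathbb{E}_{g_\beta}[Y^3]}{2(\beta c)^{3/2}} + \frac{1}{(\beta c)^2}\, \mathbb{E}_{g_\beta}\!\!\left[Y^4 \!\left(\tfrac{1}{2} h''''(\eta) - \tfrac{1}{6} h''''(\xi) \right)\right]. \]

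Combining the uniform bound $|h''''| \leq M$ from \eqref{eq:quantafourthass} with uniform (in $\beta$) boundedness of the first few moments of $g_\beta$, the second term is $\mathcal{O}(\beta^{-2})$ and the first is at worst $\mathcal{O}(\beta^{-3/2})$; moreover, if $h'''(0) = 0$ the first term vanishes exactly and only the $\mathcal{O}(\beta^{-2})$ contribution survives. Dividing through by $4\beta$ then delivers the two advertised rates: $\mathcal{O}(\beta^{-5/2})$ in general and $\mathcal{O}(\beta^{-3})$ when $h'''(0) = 0$.

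The only technical point worth flagging is the uniform moment control on $g_\beta$, i.e.\ that $|\mathbb{E}_{g_\beta}[Y^3]|$ and $\mathbb{E}_{g_\beta}[Y^4]$ stay bounded as $\beta \to \infty$. This is exactly the machinery already invoked in Lemma~\ref{lem:highterm1}: Laplace-style concentration of $f^\beta$ about the mode, controlled in the tails by regular variation of $-\log f$ (which secures even the eighth moments). Note in particular that the additional rate assumption \eqref{eq:assumptionscale} is not needed here---the order of the bound on $R(\beta)$ is dictated purely by the order of the first non-vanishing Taylor coefficient of $r - k$, which is cubic in general and quartic when $h'''(0) = 0$.
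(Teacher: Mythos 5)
Your proposal is correct and follows essentially the same route as the paper's own proof: the change of variables $X = Y/\sqrt{-\beta h''(0)}$, the Taylor expansions of $h'$ and $h''$ about the mode with Lagrange remainders, the cancellation of the quadratic terms in $r-k$, and the bounding of the remainder via $|h''''|\le M$ and moment control under $g_\beta$. Your observation that assumption \eqref{eq:assumptionscale} is not needed here is also consistent with the paper, which invokes it only in Lemma~\ref{lem:highterm1}.
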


\begin{proof}
Recall that
	\begin{eqnarray}
	 \frac{1}{4\beta}R(\beta)&=& \frac{1}{4\beta }\mathbb{E}_{\beta}\left[ X^2h''(X)-X h'(X)\right] \nonumber\\
	&=& \frac{1}{4\beta }\mathbb{E}_{g_\beta}\Bigg[ \left(\frac{Y}{\sqrt{\beta (-h''(0))}}\right)^2h'' \left(\frac{Y}{\sqrt{\beta (-h''(0))}}\right) \nonumber\\
	&&~~~~~~~~~ -\frac{Y}{\sqrt{\beta (-h''(0))}}h' \left(\frac{Y}{\sqrt{\beta (-h''(0))}}\right)\Bigg]. \label{eq:rbetascale} 
	\end{eqnarray}
Using Taylor expansion about the mode at $0$ then
\begin{eqnarray}
h' \left(\frac{y}{\sqrt{\beta (-h''(0))}}\right) &=& h'(0) + \frac{y}{\sqrt{\beta (-h''(0))}} h''(0) + \frac{y^2}{2\beta(-h''(0))}h'''(0) \nonumber \\
&&~~~ + \frac{y^3}{6\beta^{3/2} (-h''(0))^{3/2}}h''''(\xi_2(y)),  \label{eq:Term1scale}
\end{eqnarray}
where $\xi_2(\cdot)$ is the truncation term for the Taylor expansion such that $0<|\xi_2(y)|<\left|\frac{y}{\sqrt{\beta (-h''(0))}}\right|$ for all $y$.
Also,
\begin{eqnarray}
h'' \left(\frac{y}{\sqrt{\beta (-h''(0))}}\right) &=& h''(0) + \frac{y}{\sqrt{\beta(-h''(0))}}h'''(0) \nonumber \\
&&~~~ +\frac{y^2}{2\beta^{3/2} (-h''(0))^{3/2}}h''''(\xi_3(y))  \label{eq:Term2scale}
\end{eqnarray}
where $\xi_3(\cdot)$ is the truncation term for the Taylor expansion such that $0<|\xi_3(y)|<\left|\frac{y}{\sqrt{\beta (-h''(0))}}\right|$ for all $y$. Hence, 
\begin{eqnarray}
&&\frac{y^2}{2\beta(-h''(0))}h'' \left(\frac{y}{\sqrt{\beta (-h''(0))}}\right)-\frac{y}{\sqrt{\beta (-h''(0))}}h' \left(\frac{y}{\sqrt{\beta (-h''(0))}}\right) \nonumber\\
&& = \frac{y^3}{2\left(\beta(-h''(0))\right)^{3/2}} h'''(0)+\frac{y^4}{\left(\beta(-h''(0))\right)^{2}}\left[ \frac{1}{2}h''''(\xi_3(y))- \frac{1}{6} h''''(\xi_2(y))\right]. \nonumber
\end{eqnarray}
Substituting this in to the $ \frac{1}{4\beta }R(\beta)$ term in \eqref{eq:rbetascale} 
\begin{eqnarray}
 \frac{1}{4\beta }R(\beta)&=& \frac{1}{4\beta }\mathbb{E}_{g_\beta}\Bigg[   \frac{Y^3}{2\left(\beta(-h''(0))\right)^{3/2}} h'''(0)\nonumber \\ 
&&+\frac{Y^4}{\left(\beta(-h''(0))\right)^{2}}\left[ \frac{1}{2}h''''(\xi_3(Y))- \frac{1}{6} h''''(\xi_2(Y))\right] \Bigg]\nonumber \\
&=&  \frac{h'''(0)}{8\beta^{5/2} (-h''(0))^{3/2}}\mathbb{E}_{g_\beta} \left[ Y^3 \right] \nonumber \\ 
&& + \frac{1}{4\beta^{3} (-h''(0))^{2}} \mathbb{E}_{g_\beta} \left[ Y^4  \left[\frac{1}{2}h''''(\xi_3(Y))- \frac{1}{6} h''''(\xi_2(Y))\right]\right],  \nonumber 
\end{eqnarray}
where \[ \mathbb{E}_{g_\beta} \left[ Y^4  \left[\frac{1}{2}h''''(\xi_3(Y))- \frac{1}{6} h''''(\xi_2(Y))\right]\right]   <\infty \]
due to the assumptions of boundedness of the fourth derivatives of $\log f(X)$ and the existence of moments. Hence, in general \[  \frac{1}{4\beta}R(\beta)= \mathcal{O}\left( \frac{1}{\beta^{5/2}} \right) \] but in the case that $h'''(\cdot)=0$, which is the case when $f(\cdot)$ is symmetric about the mode point 0, then  \[  \frac{1}{4\beta }R(\beta)= \mathcal{O}\left( \frac{1}{\beta^{3}} \right). \] Consequently, 
\begin{equation}
	 \frac{1}{4\beta }R(\beta) =  \mathcal{O}\left(\frac{1}{\beta^k}\right) \label{eq:orderRbet}
\end{equation}
where in general $k= 5/2$ but in the case that  $h'''(0)=0$ then $k= 3 $.
\end{proof}

\bibliographystyle{apalike}

\bibliography{biblisim}

\begin{thebibliography}{}

\bibitem[Atchad{\'e} et~al., 2011]{atchade2011towards}
Atchad{\'e}, Y.~F., Roberts, G.~O., and Rosenthal, J.~S. (2011).
\newblock {Towards Optimal Scaling of {M}etropolis-Coupled {M}arkov chain
  {M}onte {C}arlo}.
\newblock {\em {Statistics and Computing}}, 21(4):555--568.

\bibitem[Barndorff-Nielsen and Nielsen, 1989]{Barndorff-Nielsen1989}
Barndorff-Nielsen, O.~E. and Nielsen, O. E.~B. (1989).
\newblock {Asymptotic Techniques; for use in Statistics}.
\newblock Technical report.

\bibitem[Bingham et~al., 1989]{Bingham1989}
Bingham, N.~H., Goldie, C.~M., and Teugels, J.~L. (1989).
\newblock {\em Regular variation}, volume~27.
\newblock Cambridge university press.

\bibitem[Brooks et~al., 2003]{brooks2003efficient}
Brooks, S.~P., Giudici, P., and Roberts, G.~O. (2003).
\newblock {Efficient Construction of Reversible Jump {M}arkov chain {M}onte
  {C}arlo Proposal Distributions}.
\newblock {\em {Journal of the Royal Statistical Society: Series B (Statistical
  Methodology)}}, 65(1):3--39.

\bibitem[Carter and White, 2013]{Carter2013}
Carter, J. and White, D. (2013).
\newblock History matching on the imperial college fault model using parallel
  tempering.
\newblock {\em Computational Geosciences}, 17(1):43--65.

\bibitem[Durrett, 2010]{Durrett2010}
Durrett, R. (2010).
\newblock {\em {Probability: Theory and Examples}}.
\newblock Cambridge university press.

\bibitem[El{\'\i}as and Del~Campob, 2007]{factoclass2007combinacion}
El{\'\i}as, C. and Del~Campob, P.~C. (2007).
\newblock {Combinaci{\'o}n de m{\'e}todos factoriales y de an{\'a}lisis de
  conglomerados en R: el paquete FactoClass}.
\newblock {\em Revista colombiana de estad{\'\i}stica}, 30:231--245.

\bibitem[Friedman et~al., 2001]{friedman2001elements}
Friedman, J., Hastie, T., and Tibshirani, R. (2001).
\newblock {\em {The Elements of Statistical Learning}}, volume~1.
\newblock {Springer Series in Statistics New York}.

\bibitem[Ge et~al., 2017]{Ge2017}
Ge, R., Lee, H., and Risteski, A. (2017).
\newblock Beyond log-concavity: Provable guarantees for sampling multi-modal
  distributions using simulated tempering langevin monte carlo.

\bibitem[Geyer, 1991]{geyer1991markov}
Geyer, C.~J. (1991).
\newblock {Markov chain {M}onte {C}arlo Maximum Likelihood}.
\newblock {\em {Computing Science and Statistics}}, 23:156--163.

\bibitem[Geyer and Thompson, 1995]{Geyer1995}
Geyer, C.~J. and Thompson, E.~A. (1995).
\newblock Annealing markov chain monte carlo with applications to ancestral
  inference.
\newblock {\em Journal of the American Statistical Association},
  90(431):909--920.

\bibitem[Green, 1995]{green1995reversible}
Green, P.~J. (1995).
\newblock {Reversible Jump {M}arkov Chain {M}onte {C}arlo Computation and
  {B}ayesian Model Determination}.
\newblock {\em Biometrika}, 82(4):711--732.

\bibitem[Hartigan and Wong, 1979]{hartigan1979algorithm}
Hartigan, J.~A. and Wong, M.~A. (1979).
\newblock {Algorithm AS 136: A k-means Custering Algorithm}.
\newblock {\em Applied statistics}, pages 100--108.

\bibitem[Hastie, 2005]{hastie2005towards}
Hastie, D. (2005).
\newblock {\em {Towards Automatic Reversible Jump {M}arkov chain {M}onte
  {C}arlo}}.
\newblock PhD thesis, University of Bristol.

\bibitem[Kim et~al., 2006]{kim2006variable}
Kim, S., Tadesse, M.~G., and Vannucci, M. (2006).
\newblock {Variable Selection in Clustering via {D}irichlet process Mixture
  Models}.
\newblock {\em Biometrika}, 93(4):877--893.

\bibitem[Kou et~al., 2006]{kou2006discussion}
Kou, S., Zhou, Q., and Wong, W.~H. (2006).
\newblock {Equi-energy Sampler with Applications in Statistical Inference and
  Statistical Mechanics}.
\newblock {\em {The Annals of Statistics}}, pages 1581--1619.

\bibitem[Mangoubi et~al., 2018]{Mangoubi2018}
Mangoubi, O., Pillai, N.~S., and Smith, A. (2018).
\newblock Does hamiltonian monte carlo mix faster than a random walk on
  multimodal densities?

\bibitem[Marinari and Parisi, 1992]{marinari1992simulated}
Marinari, E. and Parisi, G. (1992).
\newblock {Simulated Tempering: a New {M}onte {C}arlo Scheme}.
\newblock {\em EPL (Europhysics Letters)}, 19(6):451.

\bibitem[Miasojedow et~al., 2013]{miasojedow2013adaptive}
Miasojedow, B., Moulines, E., and Vihola, M. (2013).
\newblock {An Adaptive Parallel Tempering Algorithm}.
\newblock {\em {Journal of Computational and Graphical Statistics}},
  22(3):649--664.

\bibitem[Mohamed et~al., 2012]{Mohamed2012}
Mohamed, L., Calderhead, B., Filippone, M., Christie, M., and Girolami, M.
  (2012).
\newblock Population mcmc methods for history matching and uncertainty
  quantification.
\newblock {\em Computational Geosciences}, 16(2):423--436.

\bibitem[Neal, 1996]{Neal1996}
Neal, R.~M. (1996).
\newblock {Sampling from Multimodal Distributions using Tempered Transitions}.
\newblock {\em {Statistics and Computing}}, 6(4):353--366.

\bibitem[Neal, 2000]{neal2000markov}
Neal, R.~M. (2000).
\newblock {Markov Chain Sampling Methods for Dirichlet Process Mixture Models}.
\newblock {\em {Journal of Computational and Graphical Statistics}},
  9(2):249--265.

\bibitem[{Nemeth} et~al., 2017]{2017arXiv170805239N}
{Nemeth}, C., {Lindsten}, F., {Filippone}, M., and {Hensman}, J. (2017).
\newblock {Pseudo-extended Markov Chain Monte Carlo}.
\newblock {\em ArXiv e-prints}.

\bibitem[Olver, 1968]{Olver1968}
Olver, F. (1968).
\newblock {Error bounds for the Laplace Approximation for Definite Integrals}.
\newblock {\em {Journal of Approximation Theory}}, 1(3):293--313.

\bibitem[Papaspiliopoulos and Roberts, 2003]{bernardo2003non}
Papaspiliopoulos, O. and Roberts, G.~O. (2003).
\newblock {Non-centered Parameterisations for Hierarchical Models and Data
  Augmentation}.
\newblock In {\em {Bayesian Statistics 7: Proceedings of the Seventh Valencia
  International Meeting}}, pages 307--326. Oxford University Press, USA.

\bibitem[Papaspiliopoulos et~al., 2007]{hierachicalparamRoberts}
Papaspiliopoulos, O., Roberts, G.~O., and Sk{\"o}ld, M. (2007).
\newblock {A General Framework for the Parametrization of Hierarchical Models}.
\newblock {\em Statistical Science}, pages 59--73.

\bibitem[Raykov et~al., 2016]{raykov2016simple}
Raykov, Y.~P., Boukouvalas, A., Little, M.~A., et~al. (2016).
\newblock {Simple approximate MAP inference for Dirichlet processes Mixtures}.
\newblock {\em {Electronic Journal of Statistics}}, 10(2):3548--3578.

\bibitem[Robbins and Monro, 1951]{robbins1951stochastic}
Robbins, H. and Monro, S. (1951).
\newblock {A Stochastic Approximation Method}.
\newblock {\em {The Annals of Mathematical Statistics}}, pages 400--407.

\bibitem[Roberts et~al., 1997]{roberts1997weak}
Roberts, G.~O., Gelman, A., Gilks, W.~R., et~al. (1997).
\newblock {Weak Convergence and Optimal Scaling of Random Walk {M}etropolis
  Algorithms}.
\newblock {\em {The Annals of Applied Probability}}, 7(1):110--120.

\bibitem[Roberts and Rosenthal, 2009]{roberts2009examples}
Roberts, G.~O. and Rosenthal, J.~S. (2009).
\newblock {Examples of Adaptive {MCMC}}.
\newblock {\em {Journal of Computational and Graphical Statistics}},
  18(2):349--367.

\bibitem[Roberts and Rosenthal, 2014]{roberts2014minimising}
Roberts, G.~O. and Rosenthal, J.~S. (2014).
\newblock {Minimising {MCMC} Variance via Diffusion limits, with an Application
  to Simulated Tempering}.
\newblock {\em {The Annals of Applied Probability}}, 24(1):131--149.

\bibitem[Roberts et~al., 2001]{roberts2001optimal}
Roberts, G.~O., Rosenthal, J.~S., et~al. (2001).
\newblock {Optimal Scaling for Various Metropolis-Hastings Algorithms}.
\newblock {\em {Statistical Science}}, 16(4):351--367.

\bibitem[Rue et~al., 2009]{Rue2009}
Rue, H., Martino, S., and Chopin, N. (2009).
\newblock Approximate bayesian inference for latent gaussian models by using
  integrated nested laplace approximations.
\newblock {\em Journal of the royal statistical society: Series b (statistical
  methodology)}, 71(2):319--392.

\bibitem[Sherlock, 2006]{Sherlock2006}
Sherlock, C. (2006).
\newblock {\em {Methodology for Inference on the Markov Modulated Poisson
  Process and Theory for Optimal Scaling of the Random Walk Metropolis}}.
\newblock PhD thesis, Lancaster University.

\bibitem[Tawn, 2017]{NTawnThesis}
Tawn, N. (2017).
\newblock {\em Towards {O}ptimality of the {P}arallel {T}empering {A}lgorithm}.
\newblock PhD thesis, University of Warwick.

\bibitem[VanDerwerken and Schmidler, 2013]{VanDerwerken2013}
VanDerwerken, D.~N. and Schmidler, S.~C. (2013).
\newblock {Parallel Markov Chain Monte Carlo}.
\newblock {\em arXiv preprint arXiv:1312.7479}.

\bibitem[Wang and Swendsen, 1990]{Wang1990a}
Wang, J.-S. and Swendsen, R.~H. (1990).
\newblock Cluster monte carlo algorithms.
\newblock {\em Physica A: Statistical Mechanics and its Applications},
  167(3):565--579.

\bibitem[Woodard et~al., 2009a]{woodard2009conditions}
Woodard, D.~B., Schmidler, S.~C., and Huber, M. (2009a).
\newblock {Conditions for Rapid Mixing of Parallel and Simulated Tempering on
  Multimodal Distributions}.
\newblock {\em {The Annals of Applied Probability}}, pages 617--640.

\bibitem[Woodard et~al., 2009b]{woodard2009sufficient}
Woodard, D.~B., Schmidler, S.~C., and Huber, M. (2009b).
\newblock {Sufficient Conditions for Torpid Mixing of Parallel and Simulated
  Tempering}.
\newblock {\em {Electronic Journal of Probability}}, 14:780--804.

\bibitem[Xie et~al., 2010]{Xie2010}
Xie, Y., Zhou, J., and Jiang, S. (2010).
\newblock Parallel tempering monte carlo simulations of lysozyme orientation on
  charged surfaces.
\newblock {\em The Journal of chemical physics}, 132(6):02B602.

\end{thebibliography}

\end{document}